\tikzset{cross/.style={cross out, draw=blue, minimum size=2*(#1-\pgflinewidth), inner sep=0pt, outer sep=0pt},
cross/.default={1.5pt}}
\definecolor{red}{RGB}{146,0,0}
\definecolor{blue}{RGB}{0,109,219}
\definecolor{green}{RGB}{36,255,36}
\DeclareMathOperator{\Opt}{Opt}
\DeclareMathOperator{\Prob}{P}
\DeclareMathOperator{\SVC}{SVC}
\DeclareMathOperator{\MSVC}{MSVC}
\DeclareMathOperator{\Cov}{Cov}
\DeclareMathOperator{\Val}{Val}
\DeclareMathOperator*{\E}{E}
\title{Some Results on Approximability of Minimum Sum Vertex Cover} 
\author{%
{\textsc{Aleksa Stankovi\'c} \thanks{Research supported by the Approximability and Proof Complexity project funded by the Knut and Alice Wallenberg Foundation.}} \\[1ex] 
\normalsize Department of Mathematics \\ 
\normalsize KTH Royal Institute of Technology\\ 
\normalsize \href{mailto:aleksas@kth.se}{aleksas@kth.se} 
}
\date{\today} 
\theoremstyle{plain}
\newtheorem{thm}{Theorem}[section] 
\newtheorem{lemma}[thm]{Lemma} 
\newtheorem{theorem}[thm]{Theorem} 
\newtheorem{fact}[thm]{Fact} 
\newtheorem{conjecture}[thm]{Conjecture} 
\theoremstyle{definition}
\newtheorem{definition}[thm]{Definition} 
\begin{document}

\maketitle

\section{Introduction}
In the Minimum Sum Vertex Cover problem, as an input we are given a graph $G=(V,E)$, and the goal is to find an ordering of vertices which minimizes the total cover time of edges in $E$. In particular, we visit vertices in $|V|$ steps, one at each step, and an edge  $e$ is considered to be covered at the time $t\in \left \{1,\hdots,|V|\right\}$ if the first time one of its endpoints is visited by the ordering is $t$.
\par 
The Minimum Sum Vertex Cover (MSVC) problem was introduced by Feige, Lovász, and Tetali \cite{DBLP:journals/algorithmica/FeigeLT04}, as a special case of the Minimum Sum Set Cover problem, which was of primary interest in that work. The same work showed that MSVC can be approximated within a factor of $2$ using linear programming. That work also studied MSVC on regular graphs, and observed that a greedy algorithm approximates the optimal value within a factor of $4/3$. In addition to this, using semidefinite programming it was shown that the $4/3$ factor can be improved to some non-explicit constant $\beta$ smaller than $4/3$.
\par 
The $2$-approximation algorithm for MSVC was subsequently improved by Barenholz, Feige, and Peleg \cite{BFP06}, who gave a $1.999946$-approximation algorithm for this problem. This was then substantially improved by Bansal, Batra, Farhadi, and Tetali, who, using linear programming with fairly involved rounding procedure, showed that MSVC can be approximated within a factor of $16/9$. Furthermore, the same work gives a linear programming integrality gap matching the approximation ratio.
\par 
So far explicit hardness of approximation results for this problem have been lacking, and to the best knowledge of the author, the only inapproximability result~\cite{DBLP:journals/algorithmica/FeigeLT04} gives hardness of $1+\varepsilon$, for some small non-explicit $\varepsilon>0$, using a reduction from the Minimum Vertex Cover problem on bounded degree graphs \cite{DBLP:conf/focs/AroraLMSS92,DBLP:journals/toc/AustrinKS11}. In this work we give the first explicit hardness for MSVC, which we state in the following theorem.
\begin{theorem}\label{main_theorem}
	Assuming the Unique Games Conjecture, Minimum Sum Vertex Cover is NP-hard to approximate within $1.0748$.
\end{theorem}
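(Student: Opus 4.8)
The plan is to prove Theorem~\ref{main_theorem} by a hardness reduction from Unique Games, organised as a dictatorship (long code) test tuned to the cover-time objective. The reformulation I would work with is that, writing $N=|V(G)|$ and letting $S_t$ denote the set of the first $t$ vertices of an ordering, the cost of that ordering equals $N|E|-\sum_{t=1}^{N}\Cov(S_t)$, where $\Cov(S)$ is the number of edges incident to $S$; so minimising the sum of cover times is the same as maximising $\sum_t \Cov(S_t)$ over nested sets $S_1\subset\dots\subset S_N$ of the prescribed sizes. The key simplification is $\Cov(S_t)\le\max_{|S|=t}\Cov(S)$: a soundness bound only has to control, layer by layer, how many edges a set of a given cardinality can cover, so the global chain constraint never has to be fought head-on — which is what usually makes scheduling-type problems awkward.

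Given a Unique Games instance on $W$ with label set $[L]$ and permutations $\pi_{uv}$, I would take $V(G)=W\times\{-1,1\}^R$ (one long-code block per $u\in W$), place noise edges inside each block joining $\rho$-correlated points, and for each constraint $uv$ place edges joining $x$ in block $u$ to $y$ in block $v$, where $y$ is obtained from $x$ by applying $\pi_{uv}$ to the coordinates, negating, and then resampling an $\eta$-fraction of coordinates; the long code is folded over $\mathbb{Z}_2$ and edge multiplicities balance the two families. The negation is what makes the ``consistent dictator'' assignment good: ordering block $u$ primarily by the coordinate $x_{\ell(u)}$ (refining arbitrarily) produces layers $S_t$ that, on a satisfied constraint, cover a $\min(2t/N,\,1)$ fraction of the edges — the layerwise maximum — which integrates to the completeness value $1-\int_0^1\min(2\tau,1)\,d\tau=\tfrac14$.

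For completeness, from a labeling $\ell$ satisfying a $(1-\eps)$-fraction of the constraints I would use the interleaved ordering whose layer $S_t$ is the union over all blocks of the corresponding dictator initial segment; this is a valid ordering since the blocks are equinumerous and these segments have matching sizes. A direct computation gives $\Cov(S_t)=\bigl(\min(2t/N,1)\pm O(\eps+\eta)\bigr)|E|$, so the cost is at most $\bigl(\tfrac14+O(\eps+\eta)\bigr)N|E|$. For soundness I would run the standard dichotomy: for each block, either the indicator of $S_t$ restricted to it has an influential coordinate — in which case, decoding those coordinates as labels and using the inter-block edge analysis to force $\pi_{uv}$-consistency across most constraints, one recovers a good Unique Games labeling — or it has none, and the invariance principle lets one replace the relevant inter-block contribution by its Gaussian analogue. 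Writing the interesting part of $\Cov(S_t)$ as $2\tau|E|-\sum_{uv}\Pr[x\in S_u\text{ and }y\in S_v]$, the soundness loss comes down to lower bounding the Gaussian ``overlap'' $\Pr[f(X)\le\tau,\ f'(Y)\le\tau]$, where $(X,Y)$ is jointly Gaussian with the cross-covariance induced by $\eta$ and $f,f'$ push the Gaussian measure to (essentially) the uniform law; with $\eta>0$ this overlap is bounded below by some $\gamma(\eta,\tau)>0$, which is precisely the gap between soundness and completeness.

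I expect the main obstacle to be this Gaussian extremal estimate and the optimisation it feeds into. One needs to show the overlap $\Pr[f(X)\le\tau,\ f'(Y)\le\tau]$ is minimised (or at least usefully bounded below) by one-dimensional monotone $f,f'$ — a Gaussian rearrangement argument in the spirit of the inequalities of Borell and Ehrhard — then reduce to an explicit one-variable integral, and finally choose $R,\rho,\eta$ and the intra/inter multiplicities so that $\int_0^1\!\bigl(\min(2\tau,1)-\gamma(\eta,\tau)\bigr)\,d\tau$ is as small as possible; matching exactly $1.0748$ is where the numerical content sits. One also has to verify that the ``concentrated'' rival strategies (most visibly, filling whole blocks) do not beat the dictator — which should follow because such a strategy essentially reproduces the cost of Minimum Sum Vertex Cover on the Unique Games constraint graph itself, and that graph, being a good expander, has cost comfortably above the completeness value $\tfrac14 N|E|$.
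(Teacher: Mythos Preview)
Your proposal follows the right high-level template---a long-code reduction from Unique Games with an invariance-principle soundness analysis---but it misses two ideas that are essential for reaching the $1.0748$ factor.

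The first gap is in the completeness argument. You order each block by a single dictator coordinate $x_{\ell(u)}$ and ``refine arbitrarily'' inside each half, claiming cover fraction $\min(2t/N,1)$ and hence cost $\tfrac14 N|E|$. That claim is exact only when the effective correlation on the dictator coordinate is $-1$ (i.e.\ $\eta=0$), and then the soundness lower bound is $\tfrac14$ as well, so there is no gap. For nontrivial noise the refinement inside each half matters: with arbitrary (say random) refinement and effective correlation $\rho$, a direct computation gives completeness cost $\frac{4+\rho}{12}$, which exceeds the paper's completeness $\frac{1}{3-\rho}$ by $\frac{-\rho(1+\rho)}{12(3-\rho)}>0$ throughout $\rho\in(-1,0)$ and, near the optimal $\rho\approx -0.52$, already sits at or above the soundness value $\int_0^1\Gamma_\rho(r)\,dr$. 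The paper's fix is structural rather than analytic: it reduces from \emph{Affine} Unique Games, where one good labelling $z$ yields $L$ good labellings $z_a=z+a$, and uses all of them to order each long code recursively---split on $z_1$, within each half split on $z_2$, and so on. Solving the recurrence $t_i\le \tfrac14+\tfrac{1+\rho}{4}t_{i+1}$ gives completeness $\frac{1}{3-\rho}$, and this sharper completeness is what produces a gap at all. Your ``refining arbitrarily'' is precisely the step the paper identifies (Section~1.1) as the place where a single good labelling is insufficient.

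The second gap is the amplification. Even with the recursive ordering, optimising over a single correlation parameter yields a ratio of at most about $1.0157$---this is exactly the content of Theorem~\ref{regular_main_theorem}. To reach $1.0748$ the paper takes the \emph{disjoint union} of $k=60$ copies of the basic construction, copy $i$ using its own correlation $\rho_i$ and edge-weight scaling $\alpha_i$, and numerically optimises all $120$ parameters (Section~\ref{subsection:non_regular_hardness}); the optimal ordering interleaves vertices across copies greedily, and both soundness and completeness are analysed by tracking the marginal cover rate in each copy. Your parameter space (one $\rho$, one $\eta$, and intra/inter multiplicities on a single instance) does not contain this construction, and there is no reason to expect single-instance tuning to close the gap from $1.0157$ to $1.0748$. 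The subsequent weight-removal step (Section~\ref{section:remove_weights}) is also unaddressed, though that part is comparatively routine.
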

We use the Unique Games Conjecture introduced by Khot \cite{DBLP:conf/stoc/Khot02a} as our hardness assumption. This conjecture has been a central open problem in the hardness of approximation area since its inception, and many already known (and optimal) hardness of approximation results rely on the validity of this conjecture \cite{DBLP:conf/stoc/Raghavendra08,DBLP:conf/stoc/Austrin07,DBLP:journals/jcss/KhotR08,DBLP:conf/focs/BansalK09}.  \par
We also study the MSVC on regular graphs, and in particular give the following inapproximability result
\begin{theorem}\label{regular_main_theorem}
	Assuming the Unique Games Conjecture, Minimum Sum Vertex Cover on regular graphs is NP-hard to approximate within $1.0157$.
\end{theorem}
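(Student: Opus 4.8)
The plan is to establish Theorem~\ref{regular_main_theorem} by a reduction from Unique Games that mirrors the proof of Theorem~\ref{main_theorem}, but tailored so that the instances produced are regular graphs (or close enough that a regularization step preserves the gap). Concretely, I would construct a dictatorship-test gadget on the Boolean hypercube $\{0,1\}^R$ (or a suitable biased product space) whose ``long-code'' structure encodes the labels of a Unique Games vertex, and whose edge set is arranged so that every vertex of the gadget has the same degree; the cover-time objective of MSVC then plays the role of the functional being optimized in the dictatorship test. The completeness case (a genuine dictator ordering, obtained by placing the coordinate that is a dictator first) should give a small average cover time, while the soundness case uses an invariance-principle / influence-decoding argument: any ordering achieving cover time below the soundness threshold must correlate with a small set of coordinates, which can be decoded into a Unique Games labeling of non-trivial value. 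The numerical constant $1.0157$ then comes from optimizing the ratio of the soundness value to the completeness value over the free parameters of the gadget (the dimension $R$, any bias $p$, and the thresholds used in decoding).

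The key steps, in order, would be: (1) set up the Unique Games instance and the standard reduction template, specifying how a labeling of the UG vertices yields an ordering of the MSVC instance and vice versa; (2) design the regular gadget graph — I expect one needs to take a product of hypercube-type graphs and carefully add a ``noise'' or ``consistency'' edge structure so that regularity holds exactly, perhaps by taking a weighted graph and then simulating weights by parallel edges / blow-ups, which keeps the graph regular; (3) prove completeness by exhibiting the dictator ordering and computing its expected cover time in closed form; (4) prove soundness by contradiction: assume an ordering with small cover time, apply a smoothing/averaging step to pass to a ``nice'' fractional assignment, then invoke an invariance-principle bound (Gaussian stability / Borell-type inequality, or an explicit low-influence analysis) to show the assignment must be essentially a junta, and extract UG labels with the influences as a probability distribution over candidate labels; (5) combine to get the inapproximability factor and optimize the parameters numerically to reach $1.0157$.

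The main obstacle I anticipate is reconciling \emph{exact regularity} with the gadget construction: dictatorship tests naturally produce weighted or non-regular instances (the ``query distribution'' has non-uniform marginals, and consistency edges between long codes of UG-neighbors have a different multiplicity than internal gadget edges), so forcing every vertex to have the same degree without destroying either completeness or soundness is delicate. One standard fix is to use an unweighted blow-up where each vertex is replaced by a cloud of copies and edges are distributed evenly, but one must check that MSVC on the blow-up has essentially the same optimal cover time (up to lower-order terms) as the weighted instance — this requires that the ordering can be ``averaged'' over each cloud without loss, which is plausible because the cover-time objective is, after rescaling, a smooth function of the fraction of each cloud that has appeared. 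A secondary difficulty is that the regularity constraint typically \emph{weakens} the achievable gap compared to the general case (hence $1.0157$ rather than $1.0748$), so the parameter optimization in step~(5) must be done carefully, and one should expect the completeness value to be pinned near the value a random-like ordering achieves on a regular graph, leaving only a narrow window for the soundness bound.

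Finally, I would double-check that the hardness is genuinely conditional only on the Unique Games Conjecture (as stated) and not on additional assumptions: the soundness decoding must produce a UG labeling whose value is bounded below by a function of the influences alone, uniformly in the alphabet size, which is exactly what the standard Khot-style framework guarantees when the test functions have small low-degree influences — so the invariance-principle step must be stated for functions on the relevant product space with the correct noise operator, and the ``folding'' or symmetrization that enforces the gadget's structure must be compatible with low-influence decoding. If the gadget uses a biased measure, I would use the appropriate correlated-space invariance principle (e.g.\ Mossel's) rather than the uniform-cube version.
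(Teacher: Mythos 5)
Your overall architecture (Unique Games, long-code dictatorship gadget, regularization by vertex blow-up, numerical optimization of gadget parameters) matches the paper's, but the proposal is missing the single idea that actually produces the constant $1.0157$, and another part of the proposal over-complicates the soundness.

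The missing idea is the use of \emph{Affine} Unique Games and the resulting family of consistent labelings. You propose a completeness ordering that ``places the dictator coordinate first'': visit all $(v,x)$ with $x_{z(v)}=0$, then those with $x_{z(v)}=1$. If you compute the resulting average cover time (with $\rho$-correlated test), you get $\tfrac{3+\rho}{8}$, which for the optimal $\rho \approx -0.52$ is about $0.31$. The paper instead exploits that for an affine instance with satisfying assignment $z$, every shift $z_a = z+a$, $a \in \mathbb{Z}_L$, is also (nearly) satisfying. These $L$ assignments give $L$ coordinates per long code that can be fixed consistently across all of $\mathcal{V}$, and the ordering is built \emph{recursively}: first visit the nested sub-cube $C_{L-1}$ where all of $x_{z_1(v)},\ldots,x_{z_{L-1}(v)}$ are $0$, then use that ordering to traverse $C_{L-2}$, and so on. The recurrence $t_i \leq \tfrac14 + \tfrac{1+\rho}{4} t_{i+1}$ drives the completeness down to $\tfrac{1}{3-\rho} \approx 0.284$. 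Without the affine structure there is no second consistent labeling to continue the recursion, and a single dictator coordinate ceilings you at $\tfrac{3+\rho}{8}$ — strictly worse, so the final ratio would be smaller than $1.0157$. Your step (3) as stated would not reach the claimed bound.

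On soundness, you propose to decode an ordering with small cover time into a Unique Games labeling via an influence argument on the ordering. The paper takes a cleaner route: it inherits the $(r, \Gamma_\rho(r)-\varepsilon)$-density property of the gadget graph directly from the Max-Cut soundness of \cite{DBLP:journals/siamcomp/KhotKMO07,DBLP:conf/approx/AustrinS19} (the reduction is literally the same, so the soundness theorem is cited as a black box). This density property is then applied termwise to $\SVC_G(\sigma) = \sum_t W(\sigma([t])^c,\sigma([t])^c) \geq \sum_t (\Gamma_\rho(1 - t/n)-\varepsilon)W(E)$, giving the soundness value $\int_0^1 \Gamma_\rho$ with no decoding from orderings at all. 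Your proposal would work in principle, but would require re-proving an invariance-principle statement already available off the shelf in a more useful form. Your regularization step (blow-ups into clouds, plus a sampling step to make the inter-cloud bipartite graphs regular and unweighted) is the right picture and matches Lemmas~\ref{lemma_G'}--\ref{main_weight_theorem}, though you should also note that keeping the starting weighted graph regular (in the ``weighted degree'' sense of Theorem~\ref{hardness_min_sum_vc}) is what lets the blow-up end up regular after the sampling step.
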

Finally, we revisit the approximation algorithm of Feige, Lovász, and Tetali \cite{DBLP:journals/algorithmica/FeigeLT04} for regular graphs. Our contribution can be described as follows. The algorithm for regular graphs outlined in \cite{DBLP:journals/algorithmica/FeigeLT04} uses an approximation algorithm for a problem called Max-$k$-VC in a ``black box'' manner. Max-$k$-VC problem is the problem of finding $k$ vertices in a graph that cover as many edges as possible. The approximation ratio of the algorithm for regular graphs in \cite{DBLP:journals/algorithmica/FeigeLT04} depends on the approximation ratio $\alpha$ for Max-$k$-VC problem. Due to the developments since the publication of \cite{DBLP:journals/algorithmica/FeigeLT04} on Max-$k$-VC, a better value of $\alpha$ can be achieved, and hence by using this value we can obtain a stronger approximation. Furthermore, a certain bound\footnote{We do not discuss what this bound exactly is here, for the sake of clarity.} used in an argument outlined in \cite{DBLP:journals/algorithmica/FeigeLT04} for the approximation algorithm on regular graphs is incorrect, which we show by giving a counterexample in the appendix. We correct this by proving the optimal bound, and observe that the rest of the argument still holds. Let us remark that the sharpness of the bound affects the approximation ratio, and hence finding the optimal bound is desirable in this case. In conclusion, we obtain the following result
\begin{theorem}\label{reg_theorem}
	Minimum Sum Vertex Cover can be approximated within $1.225$ on regular graphs.
\end{theorem}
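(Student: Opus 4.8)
The plan is to instantiate the regular-graph scheme sketched in \cite{DBLP:journals/algorithmica/FeigeLT04}, updating two quantitative inputs. Recall the scheme: the ordering is built in geometrically growing batches $k_1 < k_2 < \cdots$ with $k_{j+1} = c\,k_j$ for a fixed ratio $c > 1$; entering batch $j$ one has already committed a prefix of the ordering, and one runs an $\alpha$-approximation algorithm for Max-$k$-VC on the subgraph spanned by the still-uncovered edges (with $k$ the number of new vertices allotted to batch $j$), appending the returned vertices to the ordering. First I would fix $c$ and set up the usual ``histogram'' accounting: both the algorithm's objective and $\OPT$ are written as sums over time $t$ of the number of edges \emph{not yet} covered by time $t$, so it suffices to upper-bound the former profile and lower-bound the latter. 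For the lower bound on $\OPT$ one uses that in a $d$-regular graph a single new vertex covers at most $d = 2|E|/n$ new edges, so the optimal coverage curve is $(2/n)$-Lipschitz; this pins down the most favorable coverage profile $\OPT$ could possibly have.

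The heart of the argument is a per-batch coverage guarantee: in batch $j$ the algorithm covers at least an $\alpha$-fraction of the optimal Max-$k$-VC value on the residual graph, and the latter is in turn bounded below by how much an optimal ordering could cover using $k$ fresh vertices on a graph of that edge density. It is exactly the estimate feeding into this second step that is stated incorrectly in \cite{DBLP:journals/algorithmica/FeigeLT04}; I would replace it by the sharp bound --- the tight lower bound on the number of edges that any prescribed number of vertices must cover in a $d$-regular graph (equivalently, the tight upper bound on the number of edges one can avoid), which follows from a short extremal/counting argument --- and then check that, with the sharp inequality in place, the remainder of the batch analysis of \cite{DBLP:journals/algorithmica/FeigeLT04} goes through unchanged. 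A matching construction, presented in the appendix, both refutes the original bound and certifies that the corrected one is best possible. The second updated input is the value of $\alpha$: using the best currently-known approximation algorithm for Max-$k$-VC (better than what was available when \cite{DBLP:journals/algorithmica/FeigeLT04} appeared) yields a strictly larger $\alpha$, which directly improves the per-batch guarantee.

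Combining the per-batch guarantees with the $\OPT$ lower bound turns the worst-case approximation ratio into an explicit function of the batch ratio $c$ and of $\alpha$: one takes the worst coverage profile consistent with all the per-batch inequalities together with the $(2/n)$-Lipschitz constraint, forms the ratio of the two histogram sums, and optimizes over $c$. Carrying out this one-dimensional optimization with the concrete $\alpha$ produces the bound $1.225$. The step I expect to be the main obstacle is the replacement bound above: identifying the genuinely optimal extremal inequality (rather than a convenient loose one) and verifying that its slightly more delicate form still slots cleanly into the downstream analysis; once that is done, plugging in the current value of $\alpha$ and optimizing over $c$ are routine.
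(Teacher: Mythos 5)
Your proposal does not describe the algorithm the paper (following Theorem 11 of Feige, Lov\'asz, and Tetali) actually analyzes. You describe a geometrically growing multi-batch scheme, $k_{j+1} = c\,k_j$, running Max-$k$-VC on the residual graph at each stage and optimizing over the ratio $c$. The paper's algorithm for regular graphs runs Max-$k$-VC exactly \emph{once}, at $k=n/2$, and then uses a case split on $\OPT$: if the average cover time of the optimum is at least $(\tfrac14+\varepsilon)n$, plain greedy (which achieves average cover time $n/3$ on regular graphs) is already good; otherwise, the optimum is $(\tfrac14+\delta)n$ for a small $\delta<\varepsilon$, and a structural lemma shows the optimal ordering must cover a large fraction of edges in the first $n/2$ steps, so the Max-$n/2$-VC solution placed first (then the rest in arbitrary order, analyzed via greedy) is good. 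One then balances the two cases over $\varepsilon$. There is no batch decomposition or histogram-doubling argument.

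You have also mis-identified the incorrect bound from \cite{DBLP:journals/algorithmica/FeigeLT04} that needs replacing. It is not a general extremal ``how many edges must $k$ vertices cover in a $d$-regular graph'' fact. It is the conditional structural lemma: \emph{if} the optimal MSVC value on a regular graph equals $(\tfrac14+\delta)n$, then the optimal ordering covers at least a $(1-\sqrt{\delta})$ fraction of the edges in its first $n/2$ steps. The original paper claimed the stronger $(1-\delta)$; the $\sqrt{\delta}$ is what is actually tight (and the appendix gives a $K_{2,2}$-plus-triangles construction showing this). The exponent matters: plugging $1-\sqrt{\delta}$ rather than $1-\delta$ into the Max-$n/2$-VC coverage guarantee and reoptimizing $\varepsilon$ is precisely what produces $1.225$. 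Because you never pin down either the actual algorithm or the actual corrected bound, and never carry out the final balancing of the two cases, the proposal as written does not establish the stated constant.
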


\subsection{Techniques and Proof Ideas}
In this section we give an overview of the proof and briefly discuss techniques used.   \par
The starting point of our reduction are Unique Games, which we formally describe in Section \ref{section:preliminaries}. More precisely, we use regular Affine Unique Games as an input to our reduction. Regular Affine Unique Games are Unique Games in which the alphabet is understood as an additive group $\mathbb{Z}_L$, we consider constraints of form $x_u-x_v = c_e$ for an edge $e=(u,v)$, while the word regular indicates that the constraint graph is regular. Interestingly, in this work the structure of Affine Unique Games actually helps us achieve better completeness and therefore a stronger inapproximability result. The property of Affine Unique Games that we use can be described as follows. Let us consider the completeness case, in which we have some assignment $z$ of labels to the vertices in the Affine Unique Games, which satisfies almost all the constraints. Then, for any $a\in \mathbb{Z}_L$, the assignment $z_a = z + a$ gives another assignment which satisfies almost all the constraints. Furthermore, if we let $V_a, a \in \mathbb{Z}_L,$ to be the subset of the label extended graph  comprised of vertex labels ``selected'' by the map $z_a$, then the sets $V_a=\left \{(v,z_a(v)\right\}$ are disjoint, and this gives us enough structure to find an ordering with a low sum set cover value.
\par
Let us elaborate. Our reduction uses the same standard long code dictatorship testing as the celebrated paper of Khot, Kindler, Mossel, and O'Donnell \cite{DBLP:journals/siamcomp/KhotKMO07}, which among other results gave the optimal hardness of Max-Cut assuming the Unique Games Conjecture. This is the same reduction that appeared in \cite{DBLP:journals/toc/AustrinKS11,DBLP:conf/approx/AustrinS19}, and hence the graphs that are output by the reduction satisfy the same properties as outlined in these works, which turns out to be useful for studying soundness. In particular, in the soundness case, for each $r \in (0,1)$, and each vertex subset of fractional size $r$, we have a lower bound $b:=b(r)$ on the number of edges with both endpoints in this subset. Therefore, no matter which order of visiting the vertices we choose, after $t\in \left \{1,\hdots,n\right\}$ steps, we have not covered edges which have both endpoints in vertices visited after the time $t$, and hence at the time $t$ we have at least $b(1-t/n)$ uncovered edges. This gives us a lower bound of form $\sum_{i=1}^{n} b(1-r/n) \approx \int_{0}^{1} b(x) dx$. 
\par
In the completeness case, we are supposed to specify an ordering of the vertices in each of $k \in \mathbb{N}$ long codes. Once such an ordering is fixed, in the first pass we would pick first vertex in each of the $k$ long codes, after which we would pick the second vertex in each long code, etc. The order in which we visit  $k$ long codes will not be impactful. Hence, it is very important to pick the order of visiting vertices in each long code well. This is where the affine structure of Unique Games is useful. In the case we have only one good labeling (as it is the case with ``classical'' Unique Games), an obvious observation is that we can take first all vertices with $0$ in the coordinate fixed by a good labelling $z=z_0$, and then all vertices with $1$ in the same coordinate. However, there are many vertices in a long code which have $0$ in the coordinate fixed by a good labelling, and hence many orderings can be chosen. Therefore, the question is which order should one pick the vertices with in this subset? Since in Affine Unique Games we have a second satisfying assignment, namely $z_1$, there is a natural ordering among these. We iterate through vertices that have $0$ in the coordinate fixed by $z_1$, and after visiting the whole subgraph, visit vertices that have $1$ in the coordinate fixed by $z_1$. We can repeat this idea and visit smaller and smaller subgraphs, the last of which will consist only of two vertices and for which we will use $z_{L-1}$.
\par
The idea of using multiple good assignments in reductions from Unique Games already appeared in \cite{DBLP:journals/corr/abs-2010-01459}, but it is still fairly uncommon. Hence, it would be interesting to see whether it would be useful for some other problems as well.
	\par 
The output of the hardness reduction is a weighted graph, and we need to remove its weights. The idea for this is simple: we replace each vertex $v$ with $m$ new vertices which we group in a set $A_v$, for $m$ is sufficiently large. We then replace each edge $e=(u,v)$ by sampling edges between $A_u$ and $A_v$ at a correct density. This graph indeed looks like the initial graph and is almost regular, however, proving that it preserves soundness and completeness properties, requires some effort. With some additional care we can also make this graph regular, and this yields the proof of Theorem \ref{regular_main_theorem}.
\par 
To obtain the result stated in Theorem \ref{main_theorem}, we use the idea that already appeared in \cite{DBLP:journals/algorithmica/FeigeLT04}, and it is originally inspired by \cite{DBLP:journals/ipl/Bar-NoyHK99}. We first fix a number $k \in \mathbb{N}$, $k$  weights $\alpha_1,\hdots,\alpha_k \in \mathbb{R}_+,$ and $k$  correlations parameters $\rho_1,\hdots,\rho_k \in (-1,0]$. We then create $k$ regular graphs $G_1,\hdots,G_k,$ similar to the ones used in Theorem \ref{regular_main_theorem}, where $i$-th graph $G_i=(V_i,E_i)$ depends on a correlation\footnote{How the graphs depend on correlation parameters will be explained in Section \ref{section:hardness_reduction}.} parameter $\rho_i$.  The vertex sets $V_i$ are disjoint, as are the edges. For each $G_i$, we scale the weights\footnote{Obviously, we can take without loss of generality that $\alpha_1=1$} of edges $E_i$ in $G_i$  by $\alpha_i$. Assuming the weights are decreasing, the optimal ordering in both the soundness and completeness case first picks some vertices in $G_1$ that cover most edges since this is the graph with the highest weight, after which the ordering will start picking vertices among both $G_1$  and $G_2$, and after a while vertices from many graphs will ``compete'' for their place in the ordering. Although we do not find analytically what the best ordering should be, we can numerically calculate this value, and we also use numerics\footnote{Namely grid search coupled with gradient descent} to find weights that give improved hardness, which results in the proof of Theorem \ref{main_theorem}. Actually the graph $G=G_1\cup \hdots \cup G_k$ will be weighted graph, but its weights can be ``removed'', which will be discussed in Section \ref{section:remove_weights}.
		
\subsection{Organization} 
In Section \ref{section:preliminaries} we introduce the notation used in this work, recall some well known facts, and formally introduce the Minimum Sum Vertex Cover problem. Then, in Section~\ref{section:hardness_reduction}, we give our hardness reduction to weighted graphs. We split this section into two parts, with Section \ref{section:UG_reduction} discussing the reduction from Unique Games to weighted graphs which are regular in ``weighted sense''. In Section \ref{subsection:non_regular_hardness} we discuss how we can use these instances to get increased inapproximability at the loss of regularity.
\par 
In Section \ref{section:remove_weights} we show how the MSVC on weighted graphs can be reduced to MSVC on unweighted graphs with up to $o(1)$  additive loss in approximation ratio, and also discuss how we obtain regular graphs from the reduction discussed in Section \ref{section:UG_reduction}. 
\par Finally, in Section \ref{section:regular_graphs} we show how Minimum Sum Vertex Cover on regular graphs can be approximated within a factor of $1.225$, by recalling the algorithm from \cite{DBLP:journals/algorithmica/FeigeLT04} and making necessary changes.

\section{Preliminaries}\label{section:preliminaries}
For $n \in \mathbb{N}$ we use $[n]$  to denote $[n] = \left \{1,2,\hdots,n\right\}$. In this paper we work with undirected (multi)graphs $G = (V, E)$. For a set $S \subseteq V$ of vertices we use $S^c$ to denote its complement $S^c = V \setminus S$, and write $U \sqcup V$ for a disjoint union of sets $U$ and $V$. 
\par 
The initial graph output by our reduction will be edge weighted. The weights of edges are given by a function $W\colon E \rightarrow \mathbb{R}_+$. For a subset  $K \subseteq E$ we interpret $W(K)$ as:
\begin{equation*}
	\sum_{e \in K}W(e),
\end{equation*}
and $w(K)=W(K)/W(E)$. We usually write $W_e$ for $W(e)$. For $S,T \subseteq V$, we write $W(S,T)$ for the total weight of edges from $E$ which have one endpoint in $S$, and other in $T$, and $w(S,T):= W(S,T)/W(E)$. Note that, since we work with undirected graphs, the order of endpoints is not important, and therefore $w(S,T) = w(T,S)$. We remark that the sets $S$ and $T$ do not need to be disjoint. We also use $N(S,T)$ to denote the set of all edges with one endpoint in $S$ and the other endpoint in $T$. For a vertex $v \in V$, we use $N(v)$ to denote the set of its neighbours. Sometimes we will be interested in weights over two different graphs that are defined on the same vertex set, in which case we will add the graph label to the subscript in the quantity we want to express. For example, if we have two graphs $G,G'$ over the same vertex set  $V$, we will write $w_G(S,T), w_{G'}(S,T)$, etc.
 \par 
 The following definition will be useful for discussing properties of our reduction.
\begin{definition}
  A graph $G$ is $(r,h)$-dense if every subset $S \subseteq V$ with $w(S) = r$ satisfies $w(S,S) \geq h$.
\end{definition}
 Minimum Sum Vertex Cover is arguably more natural in an unweighted setting, i.e., setting in which the weights of all edges are equal, and we introduce it in this setting first, before generalizing it to the weighted case.
\begin{definition}
	Consider an unweighted graph $G=(V,E)$, and let $n=|V|$. For an ordering of vertices represented as a bijection $\sigma\colon [n] \leftrightarrow V$, and an edge $(u,v)=e \in E$, let us denote with $c_{\sigma,e}$ the ``time'' at which edge $e$  is covered, that is
	\begin{equation*}
		c_{\sigma,e} = \min(\sigma^{-1}(u), \sigma^{-1}(v)).
	\end{equation*}
	Then the Sum Vertex Cover under scheduling $\sigma$, which we denote by $\SVC_G(\sigma)$,	is given as 
	\begin{equation}\label{SVC_def_1}
		\SVC_G(\sigma) = \sum_{e \in E} c_{\sigma,e}.
	\end{equation}
	The value of Min Sum Vertex Cover is the minimal value of $SVC_G(\sigma)$ over all possible permutations $\sigma$, that is 
	\begin{equation}\label{MSVC_def_2}
		\MSVC(G) = \min_{\sigma\colon  [n]  \leftrightarrow V} \SVC_G(\sigma).
	\end{equation}
\end{definition}
 We can also reformulate the expression \eqref{SVC_def_1}, stating the value of Sum Vertex Cover under scheduling $\sigma$, as follows. At the time $t \in [n]$, the total number of edges not covered\footnote{We interpret $\sigma([t])$ as $\sigma([t]) = \left \{ \sigma(i) \mid i \in [t] \right\}$.} is $W(\sigma([t])^c,\sigma([t])^c)$, and let us assign them the cost of $1$  at that time. 
The cost $c_{\sigma,e}$  of an edge $e$  under $\sigma$ is exactly the number of times $t$  the edge was not covered, and hence we can write 
\begin{equation}
	\SVC_G(\sigma) = \sum_{t=1}^n W(\sigma([t])^c,\sigma([t])^c).
\end{equation}
We remark that this naturally allows us to define Minimum Sum Vertex Cover for edge weighted graphs. We can also discuss Minimum Sum Vertex Cover for weighted graphs in the sense of definitions \eqref{SVC_def_1} and \eqref{MSVC_def_2} by letting 
\begin{equation*}
	\SVC_G(\sigma) = \sum_{e \in E} W_e c_{\sigma,e}.
\end{equation*}
We can extend this definition in a natural way to include vertex weights. However, we have not found vertex weights to be useful for hardness reduction, and hence we omit further discussing this for the sake of simplicity.
\par
In order to state the quantities appearing in our result, it is necessary to introduce some more notation. We use $\phi(x) = \frac{1}{\sqrt{2\pi}}e^{-x^2/2}  $ to denote the density function of a standard normal random variable, and $\Phi(x) = \int_{-\infty} ^{x} \phi(y)dy$ to denote its cumulative distribution function (CDF). We also work with bivariate normal random variables, and to that end introduce the following function.
\begin{definition}
  Let $\rho \in [-1,1] $, and consider two jointly normal random variables $X,Y,$ with mean $0$, and covariance matrix $\Cov(X,Y) = \begin{bmatrix}1  & \rho \\ \rho  & 1   \end{bmatrix}$. We define $\Gamma_{\rho} \colon [0,1]^2 \rightarrow [0,1] $ as 
  \begin{equation*}
    \Gamma_{\rho}(x,y) = \Pr \left [ X \leq \Phi^{-1}(x) \wedge Y \leq \Phi^{-1}(y) \right ]. 
  \end{equation*}
We also write $\Gamma_{\rho}(x) = \Gamma_{\rho}(x,x)$. 
\end{definition}


Knowing the first derivative of $\Gamma_{\rho}$ will be useful in our proofs, and for that reason we state it here.
\begin{fact}\label{fact:rho_der}
	For $\rho \in (-1,1)$  we have 
	\begin{equation*}
		\frac{\partial  }{\partial r} \Gamma_{\rho}(r) = - \Phi\left(\sqrt{ \frac{1-\rho}{1+\rho}} \Phi^{-1}\left(\frac{1-r}{2}\right) \right).
	\end{equation*}
	Hence, $\left|\frac{\partial  }{\partial r} \Gamma_{\rho}(r)\right| \leq 1$,  for all $r\in [0,1], \rho \in (-1,1)$.
\end{fact}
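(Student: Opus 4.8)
The plan is to reduce the statement to a routine one-variable differentiation under the integral sign, built on the standard ``condition on one coordinate'' identity for bivariate Gaussians. Writing $X=Z_1$ and $Y=\rho Z_1+\sqrt{1-\rho^2}\,Z_2$ for independent standard normals $Z_1,Z_2$, so that $Y\mid X=s\sim N(\rho s,\,1-\rho^2)$, I would first record the integral representation
\begin{equation*}
  \Gamma_\rho(x,y)=\int_{-\infty}^{\Phi^{-1}(x)}\phi(s)\,\Phi\!\left(\frac{\Phi^{-1}(y)-\rho s}{\sqrt{1-\rho^2}}\right)ds .
\end{equation*}

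Next, since $\Gamma_\rho$ is symmetric in its two arguments and $\Gamma_\rho(r)=\Gamma_\rho(r,r)$, the chain rule gives $\frac{\partial}{\partial r}\Gamma_\rho(r)=2\,\partial_1\Gamma_\rho(x,y)\big|_{x=y=r}$, where $\partial_1$ denotes the partial derivative in the first slot. Differentiating the integral above in its upper limit via the Leibniz rule and using $\frac{d}{dx}\Phi^{-1}(x)=1/\phi\!\left(\Phi^{-1}(x)\right)$, the density prefactor $\phi(\Phi^{-1}(x))$ cancels and one is left with the clean identity
\begin{equation*}
  \partial_1\Gamma_\rho(x,y)=\Phi\!\left(\frac{\Phi^{-1}(y)-\rho\,\Phi^{-1}(x)}{\sqrt{1-\rho^2}}\right).
\end{equation*}
Specializing to $x=y=r$, so that the argument becomes $\frac{(1-\rho)\Phi^{-1}(r)}{\sqrt{1-\rho^2}}=\sqrt{\tfrac{1-\rho}{1+\rho}}\,\Phi^{-1}(r)$, and then rewriting with the elementary identities $\Phi(-t)=1-\Phi(t)$ and $\Phi^{-1}(1-u)=-\Phi^{-1}(u)$, I would bring the expression into exactly the form claimed. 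The bound $\bigl|\frac{\partial}{\partial r}\Gamma_\rho(r)\bigr|\le 1$ is then immediate from $0\le\Phi\le 1$.

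Everything here is routine; the two spots where I expect an argument like this to go wrong are the differentiation under the integral sign together with the $\Phi^{-1}$ Jacobian factor (an easy place to drop a sign or a term), and the final algebraic normalization of the Gaussian expression into the precise closed form of the statement. I would double-check the computation against the degenerate cases $\rho=0$ (where $\Gamma_0(r)=r^2$) and $\rho\to-1$ (where $\Gamma_{-1}(r)=\max(2r-1,0)$).
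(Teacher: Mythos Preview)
Your integral representation and the Leibniz computation are correct, and you arrive at the right expression
\[
\frac{\partial}{\partial r}\Gamma_\rho(r)=2\,\Phi\!\left(\sqrt{\tfrac{1-\rho}{1+\rho}}\,\Phi^{-1}(r)\right).
\]
The gap is in the last step: you assert that the identities $\Phi(-t)=1-\Phi(t)$ and $\Phi^{-1}(1-u)=-\Phi^{-1}(u)$ will carry this into the displayed formula $-\Phi\bigl(\sqrt{\tfrac{1-\rho}{1+\rho}}\,\Phi^{-1}(\tfrac{1-r}{2})\bigr)$. They do not, and in fact no identity can, because the two expressions are genuinely different functions. Your own suggested sanity check exposes this immediately: at $\rho=0$ you have $\Gamma_0(r)=r^2$, hence $\Gamma_0'(r)=2r$; your formula gives $2\Phi(\Phi^{-1}(r))=2r$ as it should, whereas the printed right-hand side gives $-\Phi(\Phi^{-1}(\tfrac{1-r}{2}))=-\tfrac{1-r}{2}$. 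Likewise the consequence $\bigl|\tfrac{\partial}{\partial r}\Gamma_\rho(r)\bigr|\le 1$ fails as stated (e.g.\ $\Gamma_0'(r)\to 2$ as $r\to 1$); the correct uniform bound coming from your expression is $\le 2$.

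For comparison, the paper itself does not give a proof of this fact at all: it simply cites Proposition~3.6.3 of Austrin's thesis. So there is no ``paper's approach'' to compare against; your conditioning-and-Leibniz route is exactly the standard one and is fine. The practical upshot is that the statement as printed appears to contain a transcription error, and the only place the paper uses it (the hypothesis $|g'|\le 1$ in the soundness part of the blow-up lemma) is unaffected up to replacing $1/|V|$ by $2/|V|$, which is absorbed into the existing $\varepsilon$ slack. In your write-up you should present the correct formula $2\,\Phi\bigl(\sqrt{\tfrac{1-\rho}{1+\rho}}\,\Phi^{-1}(r)\bigr)$ and the bound $\le 2$, note the discrepancy with the printed version, and actually carry out the $\rho=0$ and $\rho\to -1$ checks rather than merely promising them.
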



The proof of this fact can be found in \cite{AustrinThesis}, as Proposition 3.6.3. The hardness result stated in this paper is based on the Unique Games Conjecture. In order to state this conjecture, we first introduce Unique Games.
\begin{definition}
  A \emph{Unique Games instance} $\Lambda = (\mathcal{U},\mathcal{V},\mathcal{E},\Pi,[L])$ consists of an unweighted bipartite multigraph 
  $(\mathcal{U} \sqcup \mathcal{V},\mathcal{E})$, a set 
  $\Pi = \{\pi_{e}\colon [L] \to [L] \mid e \in \mathcal{E} \textrm{ and }\pi_e \textrm{ is a bijection} \} $ of permutation constraints, and a set $[L]$ of labels. The value of $\Lambda$ under the assignment $z\colon \mathcal{U} \sqcup \mathcal{V} \to [L] $ is the fraction of edges satisfied, where an edge $e=(u,v), u \in \mathcal{U}, v \in \mathcal{V},$ is satisfied if $\pi_e(z(u)) = z(v)$. We write $\Val_z(\Lambda)$ for the value of $\Lambda$ under $z$, and $\Opt(\Lambda)$ for the maximum possible value over all assignments $z$.
\end{definition}
Let us remark that we require Unique Games instance graph $\left(\mathcal{U},\mathcal{V},\mathcal{E}\right)$ to be regular. Since Unique Games belong to the class of problems known as Constraint Satisfaction Problems (CSPs), without loss of generality we can assume regularity, as shown in \cite{DBLP:journals/ipl/Stankovic22}. \par
The Unique Games Conjecture  \cite{DBLP:conf/stoc/Khot02a} can be stated as follows (\cite{DBLP:conf/coco/KhotR03}, Lemma 3.4).
\begin{conjecture}[Unique Games Conjecture] 
  For every constant $\gamma > 0$ there is a sufficiently large $L \in \mathbb{N}$, such that for a Unique Games instance $\Lambda = (\mathcal{U},\mathcal{V},\mathcal{E},\Pi,[L]) $ with a regular bipartite graph $(\mathcal{U} \sqcup \mathcal{V}, \mathcal{E})$, it is NP-hard to distinguish between
  \begin{itemize}
    \item $\Opt(\Lambda) \geq 1-\gamma$, 
    \item $\Opt(\Lambda) \leq \gamma$.
  \end{itemize}
\end{conjecture}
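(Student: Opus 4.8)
\emph{Proof strategy.} The statement is the bipartite, regular formulation of the Unique Games Conjecture, and the plan is to derive it from the original conjecture of Khot~\cite{DBLP:conf/stoc/Khot02a}, which asserts the same gap hardness for a Unique Games instance whose constraint graph is bipartite (as in a two-prover one-round game) but is allowed to be edge-weighted and irregular; concretely this amounts to reproving Lemma~3.4 of~\cite{DBLP:conf/coco/KhotR03} by a cloning argument.

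The reduction has two steps. First, write the given weighted bipartite instance as a probability distribution $\mu$ over its edges together with the permutation constraints $\{\pi_e\}$; rounding the weights to rationals with a common polynomially bounded denominator changes $\Val_z$ by at most the total rounding error, which can be made $\ll \gamma$, so it is harmless. Let $\mu_L(u) = \sum_v \mu(u,v)$ and $\mu_R(v) = \sum_u \mu(u,v)$ be the marginals and let $K$ be an integer clearing all their denominators. Replace each left vertex $u$ by $c_u := K\mu_L(u)$ copies and each right vertex $v$ by $c_v := K\mu_R(v)$ copies, and replace each edge $e=(u,v)$ by the complete bipartite ``bundle'' between the copies of $u$ and the copies of $v$, spreading the weight $\mu(u,v)$ uniformly over its $c_u c_v$ copy-pairs and letting every copy-edge inherit the constraint $\pi_e$. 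A direct computation gives that every copy of every vertex then has marginal exactly $1/K$; clearing the (rational) copy-edge weights by replacing each with the appropriate number of parallel edges produces an unweighted bipartite multigraph in which all vertices have the same degree, i.e.\ a regular instance, and, since $K$ and the multiplicities are polynomially bounded, the whole reduction runs in polynomial time.

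For completeness, a labeling $z$ of the original instance with $\Val_z \ge 1-\gamma$ lifts to the labeling that assigns every copy of $w$ the label $z(w)$, and a bundle is satisfied by the lift exactly when the original edge was, so $\Opt$ of the new instance is at least $1-\gamma$. For soundness, take any labeling $z'$ of the new instance of value $\ge 1-\gamma'$ and, for an original edge $e=(u,v)$, let $p_u$ and $q_v$ be the distributions of $z'$-labels over the copies of $u$ and of $v$; because the bundle is complete bipartite, the fraction of it satisfied equals $\sum_\ell p_u(\ell)\, q_v(\pi_e(\ell))$. By Markov this quantity is $\ge 1-\sqrt{\gamma'}$ for at least a $(1-\sqrt{\gamma'})$-fraction of the original edges, and for each such $e$ the inequality forces both $\|p_u\|_\infty$ and $\|q_v\|_\infty$ to exceed $1-\sqrt{\gamma'}$ and their argmaxes $a,b$ to satisfy $\pi_e(a)=b$; hence the plurality labeling $z(w) := \arg\max_\ell p_w(\ell)$ satisfies at least a $(1-\sqrt{\gamma'})$-fraction of the original constraints. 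Therefore $\Opt(\text{new}) \le 1-(1-\Opt(\text{original}))^2$, so if $\Opt(\text{original}) \le \gamma$ then $\Opt(\text{new}) \le 2\gamma$; since the original conjecture allows $\gamma$ to be arbitrarily small, applying it with $\gamma/2$ in place of $\gamma$ yields the claimed gap $(1-\gamma,\gamma)$ for a regular bipartite instance.

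The step I expect to need the most care is the soundness analysis of the cloning, namely verifying that ``most bundles highly satisfied'' really forces the plurality labels to be pairwise consistent with only a $\sqrt{\gamma'}$ loss; this is exactly where it matters that each bundle is a \emph{complete} bipartite graph between copy-sets rather than an arbitrary one, and where one must check the gap degrades only from $\gamma$ to $O(\gamma)$. The remaining points are bookkeeping: keeping $K$ and the edge multiplicities polynomial, and checking that the copy-edge weights are equal within each bundle so that clearing denominators leaves the graph regular. Alternatively the regularity of the constraint graph can be imported from the general CSP-regularization result of~\cite{DBLP:journals/ipl/Stankovic22}, but the cloning argument above has the advantage of transparently preserving bipartiteness and the permutation structure of the constraints.
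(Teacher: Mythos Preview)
This statement is a \emph{conjecture}, not a theorem: the paper does not prove it, and there is no ``paper's own proof'' to compare against. The paper simply states this regular bipartite formulation of the Unique Games Conjecture and cites \cite{DBLP:conf/stoc/Khot02a} for the original conjecture and \cite{DBLP:conf/coco/KhotR03}, Lemma~3.4, for the fact that one may take the constraint graph to be regular and bipartite (it also points to \cite{DBLP:journals/ipl/Stankovic22} for general CSP regularization). What you have written is therefore not a proof of the conjecture itself but a proof sketch of the \emph{reduction} showing that the regular formulation is implied by the original one---essentially a reproof of the cited lemma, which the paper does not attempt.

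Your reduction is correct in outline, but the soundness analysis is needlessly lossy. After cloning, the value of a labeling $z'$ on the bundle corresponding to an original edge $e=(u,v)$ equals $\sum_\ell p_u(\ell)\,q_v(\pi_e(\ell))$, which is \emph{linear} in each of $p_u$ and $q_v$ separately. The overall value, being a $\mu$-average of such bilinear expressions, is therefore maximized at an extreme point of each simplex, i.e.\ when every $p_w$ is a point mass---equivalently, when all copies of each original vertex receive the same label. This immediately gives $\Opt(\text{new})=\Opt(\text{original})$ (up to the initial weight-rounding error), with no square-root loss and no Markov/plurality step. Your argument does go through and yields $\Opt(\text{new})\le 2\gamma$ from $\Opt(\text{original})\le\gamma$, which suffices after halving $\gamma$, but the linearity observation is both simpler and tight and is the standard way this is handled in the references the paper cites.
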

The starting point of hardness result in this work are variant of Affine Unique Games, which are defined as follows.
\begin{definition}
	An Affine \emph{Unique Games instance} $\Lambda = (\mathcal{U},\mathcal{V},\mathcal{E},\Pi,[L])$ is a Unique Games Instance $\Lambda$ in which all permutation constraints $\pi_e$ are affine constraints. Furthermore, the alphabet $[L]$ is identified with an additive group $\mathbb{Z}_L$, and for each $\mathcal{E} \ni e = (u,v) $  we have $\pi_e(x)=x-c_e$, where $c_e \in \mathbb{Z}_L$ is a constant.
\end{definition}
We remark that approximating Affine Unique Games is equally hard as approximating Unique Games, in the sense stated by the lemma below which was proved in \cite{DBLP:journals/siamcomp/KhotKMO07}.
\begin{lemma}[Affine Unique Games Hardness] 
	Assuming the Unique Games Conjecture, the following statement holds. For every constant $\gamma > 0$, there is a sufficiently large $L \in \mathbb{N}$, such that for an Affine Unique Games instance $\Lambda = (\mathcal{U},\mathcal{V},\mathcal{E},\Pi,[L]) $ with a regular bipartite graph $(\mathcal{U} \sqcup \mathcal{V}, \mathcal{E})$, it is NP-hard to distinguish between
  \begin{itemize}
    \item $\Opt(\Lambda) \geq 1-\gamma$, 
    \item $\Opt(\Lambda) \leq \gamma$.
  \end{itemize}
\end{lemma}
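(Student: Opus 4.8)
The plan is to obtain the lemma directly from the Unique Games Conjecture by a gap-preserving reduction that converts arbitrary permutation constraints into affine ones. Starting from a regular bipartite Unique Games instance $\Lambda=(\mathcal{U},\mathcal{V},\mathcal{E},\Pi,[L])$ with arbitrary bijective constraints $\pi_e$ (which is exactly the object the Unique Games Conjecture provides), I would build an Affine Unique Games instance $\Lambda'$ over an alphabet identified with a cyclic group $\mathbb{Z}_{L'}$, with $L'$ a function of $L$, so that $\Opt(\Lambda)\geq 1-\gamma$ forces $\Opt(\Lambda')\geq 1-\gamma'$ and $\Opt(\Lambda)\leq\gamma$ forces $\Opt(\Lambda')\leq\gamma''$, with $\gamma',\gamma''\to 0$ as $\gamma\to 0$. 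Since the statement to be proved, just like the Unique Games Conjecture, quantifies over all $\gamma>0$ and lets $L$ grow, translating between the parameters of the two statements is then a routine matter of choosing the source $\gamma$ small enough.

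The heart of the matter is the construction, because a single affine constraint $x_u-x_v=c_e$ over $\mathbb{Z}_{L'}$ is much weaker than an arbitrary bijection and no fixed gadget joining $u$ to $v$ through affine constraints can enforce $x_v=\pi_e(x_u)$, since affine maps compose to affine maps. The reduction therefore has to restructure the instance rather than replace constraints locally; the device I would use is to pass to labels that are themselves group elements, so that the constraint on $e$ becomes multiplication by the fixed element corresponding to $\pi_e$ and is thus affine over that group, then descend to the cyclic group required by the definition, and finally adjoin auxiliary ``consistency'' constraints --- themselves affine --- that tie the enlarged label space back to the original alphabet $[L]$. Completeness is the easy direction: a labeling of $\Lambda$ of value $1-\gamma$ is lifted coordinatewise, every edge of $\Lambda'$ coming from a satisfied edge of $\Lambda$ as well as every consistency constraint is then satisfied, and since the edges of $\Lambda'$ are spread uniformly over the edges of $\Lambda$ (by the symmetric way the extra coordinates are introduced) one gets $\Opt(\Lambda')\geq 1-\gamma$.

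The main obstacle is soundness. Given a labeling $z'$ of $\Lambda'$ of value $\delta$, one must decode a labeling $z$ of $\Lambda$ whose value is bounded below by a fixed positive function of $\delta$, and the danger is that $z'$ abuses the new structure by placing values on the auxiliary coordinates that are locally coherent but carry no information about any labeling of $\Lambda$. The resolution is to use the consistency constraints: since they are affine they are part of $\Lambda'$, and a high-value labeling must satisfy most of them, which pins down the auxiliary coordinates tightly enough that $z'$ is, up to a controlled error, the lift of a genuine labeling of $\Lambda$; a plurality (or random) decoding applied to the projection onto $[L]$ then recovers such a labeling. Quantifying this robustly --- and, crucially, weighting the edges of $\Lambda'$ so that ``value $\geq\delta$'' simultaneously forces the consistency part and the main part of the instance to be well satisfied --- is where the real work sits. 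I expect the remaining points --- regularity of $\Lambda'$, restoring bipartiteness by the usual vertex-doubling trick (needed because the consistency constraints sit inside one side), and checking that every constraint of $\Lambda'$ has the form $\pi_{e'}(x)=x-c_{e'}$ --- to be routine bookkeeping.
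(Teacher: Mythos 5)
The paper does not prove this lemma; it simply cites it as an established result from Khot, Kindler, Mossel, and O'Donnell. The reduction there is a long code / Fourier-analytic dictatorship test: each vertex of the Unique Games instance is replaced by a long code table over $\mathbb{Z}_q$, folding makes the induced constraints affine, the noise test gives the completeness loss, and the soundness rests on an invariance-type argument that decodes influential coordinates back into a labeling of the original instance. This is a fundamentally different machine from the gadget-style group embedding you describe.

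Your sketch has concrete obstacles beyond just being incomplete. First, to turn $x_v=\pi_e(x_u)$ into a one-sided multiplication you would naturally land in $S_L$, which is non-abelian for $L\ge 3$, so the promised ``descent to the cyclic group required by the definition'' has no clear meaning; the definition in the paper requires constraints of the form $x_u-x_v=c_e$ over $\mathbb{Z}_L$. Second, the ``consistency constraints'' that are supposed to force the enlarged label to encode a genuine element of $[L]$ would have to be binary affine constraints of exactly this shape, but what you actually need to enforce is a unary condition (``$g_u$ lies in a distinguished $L$-element subset''), which a single affine equation between two variables cannot express, and you do not say what the second variable is or why the constraint graph remains bipartite and regular after adding them. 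Third, even granting the construction, the completeness direction is not ``easy'': lifting a good labeling $z$ to group-valued variables leaves many free coordinates, and the affine constraint $g_v=\pi_e\,g_u$ pins down \emph{all} of $g_v$, so different edges into the same $v$ can force incompatible values of $g_v$ even when the original edges are all satisfied by $z$. Finally, you explicitly defer the soundness analysis (``where the real work sits''), and that is precisely the nontrivial Fourier-analytic content of the known proof, so as written the argument does not establish the lemma. For the purposes of this paper it would be correct and standard to simply cite the reduction from the literature rather than reprove it.
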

We will also use the following version of Hoeffding's inequality \cite{doi:10.1080/01621459.1963.10500830}.
  \begin{lemma}\label{hoeff_bound}
    Let $X_1,\hdots,X_k$ be independent random variables such that the range of each $X_i$ is $[0,b]$, where $b \in \mathbb{R}$. Then for $X=\sum_{i=1}^k X_i$ we have 
		\begin{equation}\label{hoef_bound_eq}
      \Prob[ |X-\E[X] | \geq t ] \leq 2 e^{-\frac{t^2}{k b^2}}.
    \end{equation}
  \end{lemma}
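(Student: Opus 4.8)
The plan is to establish the one-sided bound $\Prob[X-\E[X] \geq t] \leq e^{-2t^2/(kb^2)}$ via the exponential-moment (Chernoff) method together with Hoeffding's lemma, and then to deduce the stated two-sided inequality by symmetry and a union bound; the extra factor $2$ in the exponent is slack that we simply discard at the end, since $e^{-2t^2/(kb^2)} \leq e^{-t^2/(kb^2)}$.

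First I would fix $\lambda > 0$ and apply Markov's inequality to the nonnegative random variable $e^{\lambda(X-\E[X])}$, which gives
\begin{equation*}
	\Prob[X-\E[X] \geq t] \leq e^{-\lambda t}\,\E\!\left[e^{\lambda(X-\E[X])}\right].
\end{equation*}
Writing $Y_i = X_i - \E[X_i]$, each $Y_i$ has mean $0$ and takes values in $[-\E[X_i],\, b-\E[X_i]]$, an interval of length $b$ because $X_i \in [0,b]$; moreover the $Y_i$ are independent, so the moment generating function factorizes as $\E[e^{\lambda(X-\E[X])}] = \prod_{i=1}^{k}\E[e^{\lambda Y_i}]$.

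The technical heart is Hoeffding's lemma: a mean-zero random variable $Y$ supported in an interval of length $\ell$ satisfies $\E[e^{\lambda Y}] \leq e^{\lambda^2\ell^2/8}$ for every $\lambda \in \mathbb{R}$. I would prove it in the usual way: by convexity of $y \mapsto e^{\lambda y}$ bound it above by its chord on $[a,b]$, take expectations to obtain $\E[e^{\lambda Y}] \leq e^{\psi(\lambda)}$ for an explicit $\psi$, and then check $\psi(0)=\psi'(0)=0$ and $\psi''(\lambda) \leq \ell^2/4$, so that a second-order Taylor expansion yields $\psi(\lambda) \leq \lambda^2\ell^2/8$. Applying this with $\ell = b$ to each $Y_i$ gives $\E[e^{\lambda(X-\E[X])}] \leq e^{k\lambda^2 b^2/8}$, hence $\Prob[X-\E[X] \geq t] \leq e^{-\lambda t + k\lambda^2 b^2/8}$.

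Finally I would optimize over the free parameter: the exponent $-\lambda t + k\lambda^2 b^2/8$ is minimized at $\lambda = 4t/(kb^2)$, with value $-2t^2/(kb^2)$, so $\Prob[X-\E[X] \geq t] \leq e^{-2t^2/(kb^2)}$. Running the identical argument with $-Y_i$ in place of $Y_i$ (still mean zero, still supported in an interval of length $b$) bounds $\Prob[X-\E[X] \leq -t]$ by the same quantity, and a union bound gives $\Prob[|X-\E[X]| \geq t] \leq 2e^{-2t^2/(kb^2)} \leq 2e^{-t^2/(kb^2)}$, as claimed. I do not anticipate any genuine obstacle: the only nonroutine ingredient is Hoeffding's lemma, which is classical, and the looseness between $2e^{-2t^2/(kb^2)}$ and the stated $2e^{-t^2/(kb^2)}$ means the inequality as written could equally well be obtained by directly invoking a standard reference such as \cite{doi:10.1080/01621459.1963.10500830}.
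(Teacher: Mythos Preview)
Your argument is correct: the Chernoff bound plus Hoeffding's lemma gives the sharper $2e^{-2t^2/(kb^2)}$, which you then relax to the stated bound. The paper itself offers no proof at all for this lemma---it is stated with a citation to Hoeffding's original article \cite{doi:10.1080/01621459.1963.10500830} and used as a black box---so your write-up is already more than what the paper provides.
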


\section{Hardness Reduction}\label{section:hardness_reduction}
In this section we give a hardness reduction from Unique Games to the weighted graphs.
In Section \ref{section:UG_reduction} we give a reduction from Affine Unique Games to weighted graphs which will contain properties that will be used for showing hardness of approximating Min Sum Vertex Cover on regular graphs (Theorem \ref{regular_main_theorem}). However, this graphs will be weighted, and the final step will be given in Section \ref{section:remove_weights}, by showing that a family of unweighted regular graphs exists with the similar properties.
\par 
We also discuss how the inapproximablity given by this family of graphs can be amplified at the expense of regularity in Section \ref{subsection:non_regular_hardness}. Once again we will show in Section \ref{section:remove_weights} how the weights on these graphs can be ``removed'', i.e., a family of unweighted graphs with \emph{essentialy} the same properties exists, and this will conclude the proof of Theorem \ref{main_theorem}.
\subsection{Reduction from Unique Games to Regular Weighted Graphs} \label{section:UG_reduction}
We remark that we use the same type of reduction as in \cite{DBLP:journals/siamcomp/KhotKMO07,DBLP:journals/toc/AustrinKS11,DBLP:conf/approx/AustrinS19}, with the only difference being that we now use Affine Unique Games as the starting point, and compared to \cite{DBLP:conf/approx/AustrinS19} we are here interested only\footnote{We remark that one could also consider using a reduction with biased bits, i.e., the reduction from \cite{DBLP:conf/approx/AustrinS19} with $q \neq 1/2$. However, this does not yield better inapproximability.} in the unbiased setting ($q=1/2$). The main challenge lies in proving completeness, since we will reuse the soundness property of the reduction in the aforementioned results. 
\par Before giving the full proof of the result, we will sketch the ideas behind studying the completeness case now. Consider having a labelling $z$ which satisfies almost all the edges. Let us describe what happens locally on two vertices $u,v,$ with a common neighbour $w$, which are chosen such that $(u,w)$ and  $(v,w)$ edges are satisfied by $z$. For the sake of simplicity, let us assume that the affine constraints on $e_1=(u,w)$, and  $e_2=(v,w)$, are trivial, that is,  $c_{e_1}=c_{e_2}=0$, so that the labels  $z(u)$ and  $z(v)$ are matched if and only if $z(u)=z(v)$. Then, we replace both $u$ and $v$ with $2^{L}$ strings of length $L$. Let us call the sets of strings which replaced $u$ and $v$ as $R$ and $S$, respectively. We drop indices $u,v,$ here, for the sake of readability. Hence, we have  
	\begin{equation*}
		S= \left \{(s_1,\hdots,s_L) \mid s_i \in \left \{0,1\right\}, i \in [L]\right\}, \quad R= \left \{(r_1,\hdots,r_L) \mid r_i \in \left \{0,1\right\}, i \in [L]\right\}. 
	\end{equation*}
	Edges between $S$ and $R$ are created as follows. The reduction first fixes some negative correlation parameter $\rho \in (-1,0)$. It then samples $L$ pairs of unbiased, $\rho$ correlated bits, $(s_i,r_i), i=1,\hdots,L$, and then adds an edge between $s=(s_1,\hdots,s_L) \in S$ and $r=(r_1,\hdots,r_L) \in R$. Let us use $\nu$ to denote the probability distribution of  two $\rho$ correlated, unbiased bits, i.e., 
    \begin{equation*} 
			\nu(0,0)=\nu(1,1) = \frac{1+\rho}{4}, \quad \nu(0,1) = \nu(1,0) = \frac{1-\rho}{4},
    \end{equation*}
		and study Minimum Sum Vertex Cover on this graph. We will upper bound the value of MSVC on this graph $G_L$ by\footnote{Without loss of generality, we assume that weights of edges sum up to $1$ here.} some $T_L$, by exhibiting an ordering $\sigma_L$. Actually, we build our ordering for vertices in $G_L$ by using the ordering on $G_{L-1}$, which is a graph that would have been created with an alphabet size $L-1$. In particular, we observe that the induced subgraph of $G_L$ obtained by fixing $s_1=r_1=0$ is isomorphic to  $G_{L-1}$. Hence, if we use $\sigma_{L-1}$ to visit vertices in this subgraph, edges with both endpoints in it will be visited by the time $T_{L-1}$ on average. Since the total weight of edges in this subgraph is $\nu(0,0)$, the cost of covering edges in this subgraph is at most 
		\begin{equation*}
			\nu(0,0) \cdot T_{L-1}.
		\end{equation*}
		We have spent $2^{L}$ steps in visiting this subgraph. Observe that we also covered the edges between strings $s,r,$ which have  $(s_1,r_1) \in \left \{(0,1),(1,0)\right\}$. In particular, we will show that they are covered by the time $2^{L}/2$ on average, which intuitively can be seen by observing that we visit $G_{L-1}$ in $2^{L}$ steps, and an average edge will be visited in half that time. This gives us a cost 
		\begin{equation*}
			(\nu(0,1) + \nu(1,0)) \cdot  2^{L-1}.
		\end{equation*}
		Finally, the subgraph with $s,r$ such that $s_1=r_1=1$	is also isomorphic to $G_{L-1}$, and once again use the ordering $\sigma_{L-1}$ to traverse it in $2^{L}$ steps. In this case, we have a delay of $2^{L}$ due to visiting vertices with $r_1=0$ or $s_1=0$, and hence the edges are covered by the time  $2^{L} +  T_{L-1}$, and their total cost is 
		\begin{equation*}
			\nu(1,1) \cdot (2^{L}+T_{L-1}).
		\end{equation*}
		Hence, we have that 
		\begin{equation*}
			T_L \leq 		\nu(0,0) \cdot T_{L-1} + (\nu(0,1) + \nu(1,0)) \cdot  2^{L-1} + \nu(1,1) \cdot (2^{L}+T_{L-1}). 
		\end{equation*}
		Letting $t_L = T_L/2^{L+1}$ and replacing the values of $\nu$ yields
		\begin{equation*}
			t_L \leq  \frac{1+\rho}{4} t_{L-1} + \frac{1}{4},
		\end{equation*}
			which is a recurrence relation, and solving it shows that $t_L \to \frac{1}{3-\rho}$, regardless of $t_1$. Hence, for sufficiently large $L$ we should expect to get MSVC close to $\frac{1}{3-\rho}$.
    \par 
		With this intuition in mind, we now state and prove the theorem which gives the hardness reduction from Affine Unique Games to weighted graphs.
\begin{theorem}\label{hardness_min_sum_vc}
	For any $\varepsilon > 0, \rho \in (-1,0), \gamma > 0$, there is a sufficiently large alphabet size $L \in \mathbb{N}$ and a reduction from regular Affine Unique Games instances $\Lambda=(\mathcal{U},\mathcal{V},\mathcal{E},\Pi,[L])$ to weighted multigraphs $G = (V, E)$ with the following properties:
  \begin{itemize}
\item \emph{Completeness:} If $\Opt(\Lambda) \geq 1-\gamma$, then $\MSVC(G) \leq \left(\frac{1}{3-\rho} + \varepsilon+3\gamma\right) \cdot |V| \cdot W(E)$.
    \item \emph{Soundness:} If $\Opt(\Lambda) \leq \gamma$, then for every $r \in [0,1]$, $G$ is $(r, \Gamma_{\rho}(r)-\varepsilon)$-dense.
  \end{itemize}
	Moreover, the running time of the reduction is polynomial in $|\mathcal{U}|,|\mathcal{V}|,|\mathcal{E}|$, and exponential in $L$. The weights of edges in $G$ belong to the set $\left \{ \left(\frac{1+\rho}{4}\right)^{i} \left(\frac{1-\rho}{4}\right)^{L-i}\right\}_{i=0}^{L}$. The size of $|V|$  is at least $2^{L}$, and the total weights of edges is $W(E)=|V|\cdot D^{2}/2^L$, where $D$ denotes the degree of the regular Unique Games instance. Finally, the output graph $G$ is also regular, in the sense that the value $W(u,N(u))$ is uniform across all $u \in V$, and it equals $D^{2}2^{-L+1}$.
	\begin{proof}
    Let $\nu\colon \{0,1\}^2 \rightarrow [0,1]$ be the probability distribution over correlated uniformly distributed bits with negative correlation coefficient $\rho<0$.  In other words, we have
    \begin{equation*} 
			\nu(0,0)=\nu(1,1) = \frac{1+\rho}{4}, \quad \nu(0,1) = \nu(1,0) = \frac{1-\rho}{4}.
    \end{equation*}
    \par 
    Let us now describe how the multigraph $G$ can be constructed from $\Lambda$. We define the vertex set of $G$ to be $V = \mathcal{V} \times \{0,1\}^L = \{(v,x) \mid v \in \mathcal{V}, x \in \{0,1\}^L\}$. In particular, for every vertex $v \in \mathcal{V}$ we create $2^L$ vertices of $G$, which we identify with $L$-bit strings in $\{0,1\}^L$. We also write $v^x$ for a vertex $(v,x)$ of the graph $G$.     The edges of $G$ are constructed in the following way. For every $u\in \mathcal{U}$, and for every two $v_1,v_2\in N(u)$, we create an edge between vertices $v_1^x, v_2^y$ with weight
    \begin{equation*}
       \nu^{\otimes L}(x \circ \pi_{e_1}, y \circ \pi_{e_2}), \quad  \textrm{where } e_1 = (u,v_1), \quad e_2 = (u,v_2).
    \end{equation*}
    Expressed formally, the edge set $E$ is
    \begin{equation*}
      E = \{(e_1^x,e_2^y) \mid e_1 = (u,v_1), e_2 = (u,v_2), u \in \mathcal{U}, v_1,v_2 \in \mathcal{V}, x,y \in \{ 0, 1 \}^L \}.
    \end{equation*}
The number of vertices in $G$ is $|\mathcal{V}|2^L$, and the number of edges is $|\mathcal{V}|D^22^{L+1}$, so the construction is indeed polynomial in $|\mathcal{U}|,|\mathcal{V}|$ and $|\mathcal{E}|$, and exponential in $L$. Also, since $\mathcal{V} \neq \emptyset$ we have $|V| \geq 2^{L}$, and the weights of the edges indeed belong to the set specified in the statement of the theorem. Finally, the total weight of edges incident upon each vertex $v^{x}$ is the same for any $v^{x}$, and since $W_G(E)=D^{2} |\mathcal{V}|$, we have that $W_G(v^{x},N(v^{x})) = 2 D^{2} |\mathcal{V}| \frac{1}{|V|}= D^{2}2^{-L+1}$ for all $v^{x} \in V$.
    \par
		We are using the same reduction\footnote{This is the same as the Max-Cut hardness reduction in \cite{DBLP:journals/siamcomp/KhotKMO07}. Same reduction and soundness result also appeared in \cite{DBLP:journals/toc/AustrinKS11}, albeit with biased bits.} as the one used in Theorem 3.1. from \cite{DBLP:conf/approx/AustrinS19}, and the only difference is that we are starting from Affine Unique Games instead of (general) Unique Games. Since we are using the same reduction and Affine Unique Games are subsumed by the Unique Games, our graph $G$ satisfies the same soundness property as the one expressed by Theorem 3.1. in \cite{DBLP:conf/approx/AustrinS19}, and this is exactly the soundness property stated above. Hence, we only need to show completeness. 
		\par 
		For the completeness case let us assume  $\Opt(\Lambda) \geq 1-\gamma$. Therefore, there is a labelling $z \colon \mathcal{U} \sqcup \mathcal{V} \to \mathbb{Z}_L $ such that $\Val_z(\Lambda)\geq 1-\gamma$. In particular, there is $ \mathcal{\hat{E}} \subseteq \mathcal{E}, |\mathcal{\hat{E}}| \geq (1-\gamma) |\mathcal{E}|$, such that for each $e=(u,v) \in \mathcal{\hat{E}}$ we have $z(u)-z(v)=c_e$. Let us use $\hat{E} \subseteq E$ to denote the set 
		\begin{equation*}
			\hat{E} :=\{(e_1^x,e_2^y) \in E \mid e_1,e_2 \in \mathcal{\hat{E}} \}.
		\end{equation*}
Observe that $|\hat{E}| \geq |E|\cdot (1-2\gamma)$. Since the complement of $\hat{E}$ is of small fractional size, i.e., smaller than  $2 \gamma$, in the analysis we will focus on cover times of edges in $\hat{E}$, and we will trivially upper bound the cover time of edges in $\hat{E}^{c}$ by $|V|$. In particular, let us denote with $\hat{G}$ the graph $\hat{G} = (V,\hat{E})$ and find an ordering $\sigma$ such that $SVC_{\hat{G}}(\sigma) \leq (\frac{1}{3-\rho}+\varepsilon+\gamma)\cdot |V| \cdot W(E)$. As discussed this would then give us the stated completeness
		\begin{equation*}
SVC_G(\sigma) \leq \left(\frac{1}{3-\rho}+\varepsilon+3\gamma\right)\cdot |V| \cdot W(E), 
		\end{equation*}
		by bounding the cover time of edges in $\hat{E}^{c}$ by $|V|$.
		\par 
		Before explaining how $\sigma$ is constructed, let us first introduce some notation. We use  $z_1,\hdots,z_{L}\colon  \mathcal{V} \to \mathbb{Z}_L$ to denote the mappings defined by 
		\begin{equation*}
			z_i(v) \colon = z(v) +i, \quad \textrm{for } i \in [L].
		\end{equation*}
		Let us then define sets $F_i^0,F_i^1 \subseteq V$, as the sets in which, for every $v \in \mathcal{V}$, inside the long code $(v,x)$ we fix the $z_i(v)$-th coordinate to $0$ or $1$, respectively. In particular, we have 
		\begin{equation*}
			F_i^0 = \left \{(v,x) \in V \mid x_{z_i(v)} =0 \right\}, \quad 			F_i^1 = \left \{(v,x) \in V \mid x_{z_i(v)} =1 \right\}.
		\end{equation*}
Intuitively, the sets $F_i^0$	 (or $F_i^1$) for a fixed $i$ fix the values at the coordinates in which labels ``agree''. Then, we use the sets $F_i^0$ and  $F_i^1$ to construct ordering inductively. First, we define the ordering on $C_{L-1}=F_1^0 \cap F_2^0 \cap \hdots F_{L-1}^0$, then using this ordering we define ordering on $C_{L-2}=F_1^0 \cap F_2^0 \cap \hdots F_{L-2}^0$, and so on until we construct an ordering on $C_1=F_1^0$ and finally on $C_0$ which we define to be $C_0:=V$. As we are defining orderings on $C_i,i=0,\hdots,L-1$, we will be expressing an upper bound $T_i$ for the average time edges $E_i$ with both endpoints endpoints in $C_i$ are covered by the ordering. Let us use $G_i=(C_i,E_i)$ to denote graphs these edges belong to. Before discussing our ordering, let us make an observation that $|C_{i}| =  2 \cdot |C_{i+1}|$, since $C_{i}$ has one more free coordinate for each $v \in \mathcal{V}$.
		\par 
We discuss the ordering for $C_{L-1}$ first. Before that, let us remark that the particular ordering and the cost of covering edges in $C_{L-1}$ will be inconsequential for the final value that we get in this theorem. The main reason we discuss this case here is because we believe it will be a good preparation for discussing the inductive step that will follow. In the first step, we iterate through $v \in \mathcal{V}$ in a random order\footnote{As we have said, the value obtained in the first step is not relevant as it will be seen later. Hence, we can also choose to visit $v \in \mathcal{V}$ in any fixed order in which case we can also use a trivial upper bound of $|V_{L-1}|$ on $T_{L-1}$.}, and pick $(v,x) \in C_{L-1}$ such that\footnote{Due to symmetry it is not important whether we pick  $x_{z_L(v)}=0$ or $x_{z_L(v)}=1$ in the first iteration, as long as we keep that choice fixed.} $x_{z_L(v)}=0$. Then, we iterate through $v \in \mathcal{V}$ in a random order and pick the remaining vertex at each $(v,x)$, i.e., the vertex with $x_{z_L(v)}=1$. Let us upper bound the average time an edge $e \in E_{L-1}$ with both endpoints in $C_{L-1}$ is visited by this schedule. Observe that we spent $\frac{1}{2}|C_{L-1}|$ time in the first step, and $\frac{1}{2}|C_{L-1}|$ in the second step. Thus, if an edge with both endpoints in $C_{L-1}$ has at least one endpoint with a label $0$ at $x_{z_L(v)}$, then this point will be picked in the first step on average by the time  $\frac{1}{4}|C_{L-1}|$. Otherwise, if the edge $e$ has both endpoints $v_1^{x},v_2^{y}$ picked in the second step, i.e. $x_{Z_L(v_1)}=1, y_{Z_L(v_2)}=1$, then it will be picked on average by the time $\frac{3}{4} |C_{L-1}|$. Since the weight of edges from $E_{L-1}$ picked in the first step is $(\nu(0,0)+\nu(0,1)+\nu(1,0))\cdot W_{G_{L-1}}(E_{L-1}) = \frac{3-\rho}{4} W_{G_{L-1}}(E_{L-1})$, and the weight of the remaining edges that we consider is $\nu(1,1) \cdot W_{G_{L-1}}(E_{L_1})=\frac{1+\rho}{4} \cdot W_{G_{L-1}}(E_{L_1})$, the average cover time is 
		\begin{equation*}
			T_{L-1} = \frac{3-\rho}{4} \cdot \frac{1}{4}|C_{L-1}|+ \frac{1+\rho}{4} \frac{3}{4}|C_{L-1}| = \frac{3 +\rho}{8} |C_{L-1}|.
		\end{equation*}
		We observe that this also shows that there is an ordering $\sigma_{L-1}$ which covers an edge in $E_{L-1}$ on average by the time $T_{L-1}$.
		\par 
Let us now fix $i=0,\hdots,L-2$, and assume that we have an ordering $\sigma_{i+1}$ of vertices in $C_{i+1}$ such that the edges in $E_{i+1}$ are covered by the time $T_{i+1}$ on average, and let us use this procedure to construct an ordering of the vertices in $C_{i}$ and derive a suitable upper bound on $T_{i}$. We assume that the weights of the edges $E_{i}$ are normalized so that they sum up to $1$. The ordering in $C_{i}$ works as follows. First, using $\sigma_{i+1}$ we visit vertices in $C_{i+1} = C_{i} \cap F_i^0$. The total weight of the edges with both endpoints in $C_{i} \cap F_i^0$ is $\nu(0,0) \cdot W_{G_i}(E_i)$, and they are covered by  $\sigma_{i+1}$ until $T_i$ on average. Hence, the cost for these edges is 
		\begin{equation}\label{cost1}
T_{i+1} \cdot \nu(0,0) \cdot W_{G_i}(E_i).	
		\end{equation}
Furthermore, during this pass, we have also visited all the edges with one endpoint in $C_{i} \cap F_i^0$ and another endpoint in $C_{i} \cap F_i^1$, and their total weight is $(\nu(0,1)+\nu(1,0))\cdot W_{G_i}(E_i)$. Also, these covers are disjoint (each one of these edges will be visited only once in the first pass). Since the starting Unique Games instance was regular and we removed at most $2 \gamma$ edges, the edges on average will be covered by the time 
		\begin{equation}
			\frac{1+2\gamma}{2}|C_{i+1}|
		\end{equation}
		at most. Hence, the cost for these edges is 
		\begin{equation}\label{cost2}
(\nu(0,1)+\nu(1,0))\frac{1+2\gamma}{2} |C_{i+1}| \cdot W_{G_i}(E_i).
		\end{equation}
		Finally, we pass through the vertices in $C_{i} \cap F_i^1$. The graph induced by this vertex set is actually isomorphic to $C_{i+1} = C_{i}\cap F_i^{0}$, and hence we can once again use the ordering $\sigma_{i+1}$. Then, the edges in this graph are visited on average by the time
		\begin{equation*}
			|C_{i+1}| + T_{i+1}, 
		\end{equation*}
		where the $|C_{i+1}|$	term is due to the delay coming from the first pass. Hence, the cost of these edges is at most 
		\begin{equation}\label{cost3}
\nu(1,1)( |C_{i+1}| + T_{i+1}) \cdot W_{G_i}(E_i).
		\end{equation}
		Adding up \eqref{cost1},\eqref{cost2} and \eqref{cost3} we get that
		\begin{equation}\label{T_i_recursive}
			T_i \leq \frac{1+\rho}{4} T_{i+1} + \frac{1-\rho}{2} \frac{1+2\gamma}{2}|C_{i+1}| + \frac{1+\rho}{4} \cdot (|C_{i+1}| + T_{i+1}).
		\end{equation}
		If we let $t_i = T_i/|C_i|$ and divide both sides by $|C_i|=2|C_{i+1}|$, we can write \eqref{T_i_recursive} as
		 \begin{equation*}
			 t_i \leq \frac{1+\rho}{8}	t_{i+1} + \frac{1-\rho}{4}\frac{1+2 \gamma}{2}  + \frac{1+\rho}{4}\left( \frac{1+t_{i+1}}{2}\right),
		\end{equation*}
		which can be simplified to 
		\begin{equation}\label{t_i_eq}
			t_i \leq \frac{1}{4} + \frac{1+\rho}{4} t_{i+1} + \frac{1-\rho}{4}\gamma.
		\end{equation}
		Let us show that $t_i \leq  \frac{1}{3-\rho}+\gamma +2^{-L+i}$ as follows. Let us define $r_i = t_i -\gamma - \frac{1}{3-\rho}$. By substituting $t_i = \frac{1}{3-\rho}+\gamma+ r_i$ into \eqref{t_i_eq} we obtain
		\begin{equation}
			\frac{\gamma}{2}+r_i \leq  \frac{1+\rho}{4}r_{i+1}.
		\end{equation}
		Since $ \rho \in (-1,0)$ and $\gamma>0$ we have
		\begin{equation}\label{r_i_eq}
			r_i \leq  \frac{1}{2}  r_{i+1}.
		\end{equation}
		Hence, since by calculation for $T_{L-1}$ we have $r_{L-1} \leq \frac{1}{2}$, which with \eqref{r_i_eq} implies that $r_i \leq 2^{-L+i}$, and therefore $t_0 \leq \frac{1}{3-\rho} + \gamma+2^{-L}$. By letting $L$ be large enough so that $2^{-L} \leq \varepsilon$ and recalling that $t_0 = T_0 / |V| $ we get
		\begin{equation*}
			T_0 \leq 	(\frac{1}{3-\rho} + \gamma+\varepsilon) |V|.
		\end{equation*}
Recalling that we used $T_0$ to denote an average time, we have that the actual time is at most $T_0 \cdot W(E)= (\frac{1}{3-\rho} + \gamma+\varepsilon) |V|W(E)$, 
	\end{proof}
\end{theorem}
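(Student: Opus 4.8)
The plan is to build $G$ via the standard long-code / Max-Cut-style reduction, obtain soundness directly from prior work, and devote essentially all the effort to completeness. First I would set $V=\mathcal{V}\times\{0,1\}^L$, so each $v\in\mathcal{V}$ is replaced by $2^L$ vertices $v^x$, and for every $u\in\mathcal{U}$ and every ordered pair $v_1,v_2\in N(u)$ add an edge $(v_1^x,v_2^y)$ of weight $\nu^{\otimes L}(x\circ\pi_{e_1},y\circ\pi_{e_2})$ with $e_j=(u,v_j)$. The structural bookkeeping — the vertex and edge counts, the attainable weights $\{((1+\rho)/4)^i((1-\rho)/4)^{L-i}\}_{i=0}^L$, the total weight $W(E)=D^2|\mathcal{V}|$, and the vertex-regularity $W(v^x,N(v^x))=D^2 2^{-L+1}$ — then follows by direct computation, using $\sum_y \nu^{\otimes L}(x\circ\pi_{e_1},y\circ\pi_{e_2})=1$ for each fixed $x$. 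For soundness there is nothing new to do: this is literally the reduction of \cite{DBLP:journals/siamcomp/KhotKMO07,DBLP:journals/toc/AustrinKS11,DBLP:conf/approx/AustrinS19} with the Unique Games instance restricted to be affine, and since Affine Unique Games is a special case of Unique Games, the $(r,\Gamma_\rho(r)-\varepsilon)$-density conclusion for $\Opt(\Lambda)\le\gamma$ carries over verbatim and I would just cite it.

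The real work is completeness. Given $z$ with $\Val_z(\Lambda)\ge 1-\gamma$, I would let $\hat E\subseteq E$ collect the edges whose two underlying $\mathcal{U}$-constraints are both satisfied by $z$ (so $w(\hat E^c)\le 2\gamma$), introduce the shifted labellings $z_i=z+i$ and the coordinate-fixing sets $F_i^b=\{(v,x):x_{z_i(v)}=b\}$, and build an ordering inductively on the nested family $C_i=F_1^0\cap\dots\cap F_i^0$ with $C_0=V$. The crucial structural fact, which is exactly where affinity enters, is that on $\hat G=(V,\hat E)$ the subgraph induced by $C_i\cap F_i^1$ is isomorphic to $C_{i+1}=C_i\cap F_i^0$ (a graph on a shorter alphabet): a satisfied edge $(u,v)$ obeys $z(u)-z(v)=c_e$, so $\pi_e$ carries the $z_i(v)$-th coordinate of $v$ to the $z_i(u)$-th coordinate of $u$, and fixing that coordinate to any value is therefore consistent along the whole graph. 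So at level $i$ I would first run the level-$(i+1)$ ordering $\sigma_{i+1}$ on $C_{i+1}$, then run $\sigma_{i+1}$ again on the isomorphic copy $C_i\cap F_i^1$, spending $|C_{i+1}|$ steps on each. The cover cost then splits into three parts: edges inside $C_{i+1}$ (weight $\nu(0,0)W_{G_i}(E_i)$, covered by $T_{i+1}$ on average); cross edges between $F_i^0$ and $F_i^1$ (weight $(\nu(0,1)+\nu(1,0))W_{G_i}(E_i)$, whose first-pass covers are disjoint and, using the regularity of $\Lambda$ together with $w(\hat E^c)\le 2\gamma$, average at most $\tfrac{1+2\gamma}{2}|C_{i+1}|$); and edges inside $C_i\cap F_i^1$ (weight $\nu(1,1)W_{G_i}(E_i)$, delayed by $|C_{i+1}|$ from the first pass, hence covered by $|C_{i+1}|+T_{i+1}$ on average).

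Summing the three contributions and normalizing $t_i=T_i/|C_i|$ using $|C_i|=2|C_{i+1}|$ gives the recurrence $t_i\le\tfrac14+\tfrac{1+\rho}{4}t_{i+1}+\tfrac{1-\rho}{4}\gamma$; substituting $t_i=\tfrac{1}{3-\rho}+\gamma+r_i$ collapses it to $r_i\le\tfrac12 r_{i+1}$, and a crude two-pass estimate on $C_{L-1}$ supplies the base case $r_{L-1}\le\tfrac12$, so $r_0\le 2^{-L}$. Taking $L$ large enough that $2^{-L}\le\varepsilon$ gives $\SVC_{\hat G}(\sigma)\le(\tfrac{1}{3-\rho}+\varepsilon+\gamma)|V|W(E)$, and bounding the cover time of the at most $2\gamma$-weight of edges in $\hat E^c$ trivially by $|V|$ adds another $2\gamma|V|W(E)$, yielding the claimed $\tfrac{1}{3-\rho}+\varepsilon+3\gamma$. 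The step I expect to be the main obstacle is this completeness argument: pinning down the level-$i$ isomorphism on $\hat G$ at the same time as the disjointness and averaging statements for the cross edges, and making sure that discarding a $\le 2\gamma$-fraction of edges perturbs the averaged cover times only by the $O(\gamma)$ amounts already absorbed into the recurrence rather than breaking the induction.
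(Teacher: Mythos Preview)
Your proposal is correct and follows essentially the same approach as the paper: the same long-code construction, soundness cited from \cite{DBLP:conf/approx/AustrinS19}, and for completeness the same nested family $C_i$ built from the shifted labellings $z_i=z+i$, the same two-pass inductive ordering with the three-way cost split (weights $\nu(0,0)$, $\nu(0,1)+\nu(1,0)$, $\nu(1,1)$), the same recurrence $t_i\le\tfrac14+\tfrac{1+\rho}{4}t_{i+1}+\tfrac{1-\rho}{4}\gamma$ solved via the substitution $r_i=t_i-\gamma-\tfrac{1}{3-\rho}$, and the same trivial $|V|$-bound on the $\le 2\gamma$ weight of bad edges. The only discrepancy is a harmless index shift in $C_i\cap F_i^1$ (which is empty as written; you mean the $F_{i+1}^1$-half of $C_i$), and the paper has the analogous slip.
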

This reduction outputs a weighted graph. In the Section \ref{section:remove_weights} we will show how this weighted graph can be transformed into an unweighted graph with \emph{essentially} the same properties using a polynomial time reduction. For now, let us briefly discuss how soundness and completeness properties stated in the theorem above are useful for studying Min Sum Vertex Cover. 
\par
For the completeness, we will get that $\MSVC(G) \leq \left(\frac{1}{3-\rho} + \varepsilon+3\gamma\right)\cdot |V| \cdot W(E)$. On the other hand, in the soundness case we have that for any ordering $\sigma$ we have 
\begin{equation*}
	\begin{split}
\SVC_G(\sigma) = \sum_{t=1}^{|V|} W(\sigma([t])^c,\sigma([t])^c) \geq  \sum_{t= 1 }^{n} \Gamma_{\rho}(1-t/n)\cdot W(E) -\varepsilon \cdot W(E) \\
		= \left(\int_{0}^{1} \Gamma_{\rho}(1-r) dr -\varepsilon\right) \cdot |V|\cdot W(E)+ O(W(E)) \\
		= \left(\int_{0}^{1} \Gamma_{\rho}(r) dr -\varepsilon\right) \cdot |V| \cdot W(E) +O(W(E)) .
	\end{split}
\end{equation*}
Hence, by letting $\gamma \to 0, \varepsilon \to 0, |V| \to \infty,$ we get an inapproximability ratio of
\begin{equation*}
	\frac{\int_{0}^{1} \Gamma_{\rho}(r) dr}{\frac{1}{3-\rho}}.
\end{equation*}
This expression is minimized for $\rho\approx -0.52$, for which the inapproximability ratio is approximately $1.0157$. \par 
\subsection{Improved Jardness for Non-Regular Graphs}\label{subsection:non_regular_hardness}
In this section we describe how we can use the properties of regular graphs output by the reduction from Section \ref{section:UG_reduction} to obtain the hardness of approximating MSVC within $1.0748$.
\par 
Our main idea is simple. Fix	$k \in \mathbb{N}$, $k$  weights $\alpha_1,\alpha_2,\hdots,\alpha_k \in \mathbb{R}_+$, and $k$  correlation parameters $\rho_1,\hdots,\rho_k \in (-1,0]$. We then construct an instance $\bar{G}$ which consists of $k$  disjoint graphs $G_1,\hdots,G_k,$ in which the edges of $G_i$ are obtained from graphs $G$ introduced in Theorem \ref{hardness_min_sum_vc} with the correlation parameters $\rho_i$, respectively. The weights of edges in $G_i$ are multiplied by $\alpha_i$.
\par 
Let us now discuss the soundness and the completeness properties of the newly created graph $\bar{G}$. In both soundness and completeness case at each step $t$, the optimal ordering will choose an unpicked vertex $v$  from one of the graphs $G_1,\hdots,G_k$. Furthermore, if the vertex $v$ is picked from the graph $G_i$, it will be exactly the next unpicked vertex in the optimal ordering for the graph $G_i$. Also, the optimal choice of the graph $G_i$  will be exactly the graph in which the picked vertex $i$  covers the largest weight of uncovered edges. In other words, the choice of the graph $G_i$  is greedy.
\par 
In order to calculate which graph to pick and how many edges are covered, for each graph $G_i$  we require a function $c^i(t)$ which gives a lower bound on the relative weight of edges covered by the fractional time $t$, and the function $s^i(t)$  which gives an upper bound on the relative weight of the edges covered by the fractional time $t$.
\par 
Let us discuss how to express the functions $s^i$  and $c^i$  now. For the sake of notational simplicity we drop the superscript in the discussion that follows, and assume that $\alpha_i=1$.
\par
In the soundness case, we have that $s(t) = 1-\Gamma_\rho(1-t) + \varepsilon$, since after fractional time $t$  at least $\Gamma_\rho(1-t)-\varepsilon$ fraction of edges is not covered yet, and hence at most $1-\Gamma_\rho(1-t)+\varepsilon$ fraction of edges can be covered.
\par 
In the completeness case, the function $c$  actually depends on the parameter $L$ which is the alphabet size in Theorem \ref{regular_main_theorem}. Hence, let us define $c_i$  to be the function $c$  for the alphabet size $i$. For $c_1(t)$, we have that $c_1(0)=0$, for $t\in (0,1/2]$ we have that
\begin{equation*}
	c_1(t)\geq \left[\nu(0,1)+\nu(1,0)\right] \cdot 2\cdot t - 2 \gamma,
\end{equation*}
where the right hand side can be explained as follows. In the first $1/2$  steps we disjointly cover good edges with exactly one endpoint in $F_1^0$, and we have the term $[\nu(0,1)+\nu(1,0)] \cdot 2t$ due to that. The term $2 \gamma$ is due to the cover time of bad edges. Finally, we have that 
\begin{equation*}
	c_1(t) \geq \nu(0,0)+\nu(0,1)+\nu(1,0) - 2 \gamma,
\end{equation*}
for $t \in (1/2,1]$ since we covered edges of weight $\nu(0,0)+\nu(0,1)+\nu(1,0)$ by the time $t=1/2$.
\par 
We derive a lower bound on $c_i$  inductively. In particular, for $t \in [0,1/2]$ we have 
\begin{equation*}
c_{i+1}(t) \geq \nu(0,0) \cdot c_i(2\cdot t)  + (\nu(0,1)+\nu(1,0))\cdot 2 t - 2 \gamma.
\end{equation*}
Once again, the term $2 \gamma$ is due the bad edges. The term $\nu(0,0)\cdot c(2\cdot t)$ is because the edges with both endpoints in $F_1^0$  will follow the visit time in the graph $F_1^0$ which is isomorphic to the graph  constructed with $i$  labels. Finally, the term $(\nu(0,1)+\nu(1,0))\cdot t$  exists due to disjoint cover of the edges with only one endpoint in $F_1^0$. \par
For $t \in (1/2,1)$, we have that 
\begin{equation*}
	c_{i+1}(t) \geq  [\nu(0,0)+\nu(1,0)+\nu(0,1) ] + c_i( 2\cdot t -1 ) \cdot \nu(1,1) - 2 \gamma,
\end{equation*}
The term $[\nu(0,0)+\nu(1,0)+\nu(0,1)]$ exists since we covered that fraction of good edges in the first half of the ordering, and the term $c_i( 2\cdot t -1 ) \cdot \nu(1,1) $ exists since the graph $F_1^1$  is isomorphic to the graph created  with $i$ labels, while the term $2 \gamma$ is due to the existence of the bad edges.
\par
Since we can choose arbitrarily big alphabet size $L$, instead of a particular function $c_i$  we can study the function $c$  which is obtained by letting $i \to \infty$. While we do not have a closed form expression for $c$, we can calculate it to high precision by iterating recurrence formula given above.
\par 
This allows us to get inapproximability ratio for any fixed set of weights $\alpha_1,\hdots,\alpha_k$. By using grid search paired with a gradient descent we can find values of $\alpha_i$  for which inapproximability ratio is $1.0748$. The values of $\alpha_i$ and $\rho_i$ are given in Figure \ref{rho_alpha_table}. The author also provided the code which verifies the claims in this section (and thus with the Section \ref{section:remove_weights} proves Therem \ref{main_theorem}) on his website.
\begin{figure}
	\begin{center}
		\begin{tabular}{ |c| c| c| c| c| c| c|c|c|c| c| c|c|  }
			\hline
			 i & 1 & 2 & 3 & 4 & 5 & 6 & 7 & 8 & 9 & 10\\ 
				$\alpha_i$ & 1 &40.20 &40.22 &43.78 &43.78 &43.81 &43.82 &47.74 &47.81 &53.19 \\
				 $\rho_i$ &-0.979 & -0.974 & -0.975 & -0.968 & -0.970 & -0.972 & -0.973 & -0.962 & -0.964 & -0.929 \\
			\hline 
			\hline
				i & 11 & 12 & 13 & 14 & 15 & 16 & 17 & 18 & 19 & 20\\ 
				$\alpha_i$& 53.50 &53.53 &53.76 &54.10 &58.75 &59.14 &63.07 &63.09 &67.61 &67.64 \\
			$\rho_i$ &-0.931 & -0.936 & -0.937 & -0.970 & -0.921 & -0.923 & -0.912 & -0.914 & -0.903 & -0.905  \\
			\hline
			\hline
				i & 21 & 22 & 23 & 24 & 25 & 26 & 27 & 28 & 29 & 30\\ 
				$\alpha_i$& 72.36 & 72.41 & 79.84 & 79.94 & 81.10 & 81.21 & 94.5 & 94.67 & 96.14 & 96.27  \\
			$\rho_i$ &-0.894 & -0.896 & -0.876 & -0.879 & -0.882 & -0.884 & -0.856 & -0.859 & -0.862 & -0.866 \\
			\hline
			\hline
				i & 31 & 32 & 33 & 34 & 35 & 36 & 37 & 38 & 39 & 40\\ 
				$\alpha_i$ & 98.32 & 98.5 & 118.05 & 118.25 & 120.76 & 121 & 123.61 & 123.94 & 145.96 & 146.39 \\
			$\rho_i$ &-0.857 & -0.860 & -0.840 & -0.844 & -0.841 & -0.846 & -0.841 & -0.844 & -0.837 & -0.840 \\
			\hline
			\hline
				i & 41 & 42 & 43 & 44 & 45 & 46 & 47 & 48 & 49 & 50\\ 
				$\alpha_i$ & 150.07 & 150.58 & 169.25 & 169.78 & 186.55 & 187.22 & 214.35 & 217.53 & 222.30 & 222.94 \\
			$\rho_i$ & -0.838 & -0.840 & -0.837 & -0.842 & -0.841 & -0.844 & -0.839 & -0.845 & -0.847 & -0.851 \\
			\hline
			\hline
				i & 51 & 52 & 53 & 54 & 55 & 56 & 57 & 58 & 59 & 60\\ 
				$\alpha_i$ & 260.16 & 265.34 & 299.47 & 306.72 & 353.2 & 361.89 & 436.79 & 448.42 & 607.90 & 608.43 \\
			$\rho_i$ & -0.853 & -0.858 & -0.865 & -0.865 & -0.876 & -0.878 & -0.891 & -0.895 & -0.916 & -0.917 \\
			\hline
		\end{tabular}

	\end{center}
\caption{Weights $\alpha_i$  and correlation parameters $\rho_i$  used to get graph $G=G_1\cup \hdots \cup G_k$ for which the MSVC is hard to approximate within $1.0748$.}
\label{rho_alpha_table}
\end{figure}

\section{Removing Weights and the Proof of the Main Lemma} \label{section:remove_weights}
In this section we discuss how the weighted graphs constructed in the previous section can be reduced to unweighted graphs with similar properties.
We sketch the idea first, before giving the details.  \par The reduction consists of two steps. The first step will take a weighted graph\footnote{This can be either a graph from Section \ref{section:UG_reduction} or Serction \ref{subsection:non_regular_hardness}.} $G$, and construct another weighted graph $G'$ by replacing every vertex $v$ from $G$ by $m$ vertices in $G'$, replacing each $e=(u,v)$ of weight $W_e$ by a complete bipartite graph with the weights of $m^2$ edges being  $W_e$. We will show that this graph satisfies quantitatively the same completeness and soundness properties as the graph $G$. 
\par 
Then, to remove the weights, for each complete bipartite graph between $u$ and $v$, we replace $m^{2}$ edges with approximately $m^{2}W_e$ edges, in such a way that the bipartite graph has good regularity properties in the sense of Szemerédi. This procedure outputs an unweighted graph with the right density in each bipartite graph which ``simulates'' edge $e$. This will give us an unweighted graph $G''$. If we use the graph from Section \ref{section:UG_reduction} as an input to the reduction, we will show that we obtain a regular graph $G''$, and furthermore this graph will satisfy properties sufficient for proving Theorem \ref{regular_main_theorem}. If we start with the graph from Section \ref{subsection:non_regular_hardness}, we will obtain the graph $G''$ which will not be regular, but its properties will be sufficient to prove Theorem \ref{main_theorem}.
\par 
Let us give the details now. We begin by showing that $G'$ satisfies similar completeness and soundness properties as the graph $G$.
\begin{lemma} \label{lemma_G'}
	Consider a weighted graph $G=(V,E)$, and let $m \in \mathbb{N}$. Let $G'=(V',E')$ be a graph obtained from $G$ by replacing each vertex $v$ in $G$ by $m$ vertices  $v_1,\hdots,v_m,$ and by replacing each edge $e=(u,v)$ of $G$ by $m^{2}$ edges of weight $W_e$. Then, the graph $G'$ satisfies the following properties:
	\begin{itemize}
\item \emph{Completeness:} If $\MSVC(G) \leq  \tau |V| W_G(E)$ for some $\tau \in \mathbb{R}_+$ , then $\MSVC(G') \leq \tau|V'| W_{G'}(E')$.
\item \emph{Soundness:} In case $G$ is $(r, g(r))$-dense for some continuously differentiable function $g\colon [0,1] \to [0,1]$ with $|g'|\leq 1$ , then $G'$ is $(r, g(r)-1/|V|)$-dense.
	\end{itemize}
	\begin{proof}
We start by proving the completeness property. Hence, let us fix $\sigma$ such that $\SVC_{\sigma}(G) \leq \tau |V| W_G(E)$. Then, let us consider $\sigma'$ which first visits duplicates of $\sigma(1)=v$, i.e., it visits first $v_1,\hdots,v_m$, after which it visits all the duplicates of $\sigma(2)$, and so on until it visits duplicates of $\sigma(n)$. Observe that the order of visiting $v_i,i=1,\hdots,m,$ is irrelevant, for each $v \in V$. Consider any edge $e \in E$ covered by the vertex  $v = \sigma(t)$ at the time $t$ by the schedule  $\sigma$. Then, the vertices $v_1,\hdots,v_m$ will be picked in times $(t-1)\cdot m +1, \hdots, t \cdot m$. Since the  edge $e$ was replaced by the complete bipartite graph, edges in this graph will be covered by the time $t\cdot m + (1-m)/2$ on average. Therefore, we have that 
		\begin{equation*}
			\begin{split}
				\MSVC(G') \leq \sum_{e \in E} (c_{\sigma,e} \cdot m + \frac{1-m}{2}) \cdot W_e \cdot m^{2} \leq   \sum_{e \in E} c_{\sigma,e} W_e \cdot m^{3} .
			\end{split}
		\end{equation*}
		Now, we have that $ \sum_{e \in E} c_{\sigma,e} W_e = \MSVC(G)$, and therefore we have 
		\begin{equation*}
			\MSVC(G')  \leq m^{3} \MSVC(G).
		\end{equation*}
		Finally, we have that $\MSVC(G) \leq \tau |V| W_{G}(E)$, and hence
		\begin{equation*}
			\MSVC(G') \leq \tau |V| W_{G}(E) m^{3}.
		\end{equation*}
Since $|V'|=m|V|$	and $W_{G'}(E')=m^{2} W_G(E)$  we can rewrite this as 
		\begin{equation*}
			\MSVC(G') \leq \tau |V'| W_{G'}(E'),
		\end{equation*}
		which is what we claimed by the completeness property. \par
Let us now show the soundness property.	Hence, let us assume that for every $r \in [0,1]$, $G$ is $(r, g(r))$-dense, pick an arbitrary $S' \subseteq V', |S'|=r|V'|$, and show that $w(S',S') \geq g(r)-1/|V|$. For a set $S'$, let us define $\left \{q_v\right\}_{v \in V}$ to be the quantities which measure for each $v \in V$  how many vertices from $S'$ are in $A_v :=\left \{v_1,\hdots,v_m\right\}$. In particular, for each $v \in V$ we let 
		\begin{equation*}
			q_v = \left| S' \cap \left \{v_1,\hdots,v_m\right\} \right| / m.
		\end{equation*}
		For $T' \subseteq V'$, let us use $f(T')$  to denote the number of $v \in V$  such that $\left|A_v \cap T'\right| \not \in  \left \{0,m\right\}$. Then, we consider the following two cases:
		\begin{itemize}
			\item[(a)]  $f(S') \leq 1$.
			\item[(b)] $f(S') \geq 2$.
		\end{itemize}
Let us first resolve case (a). Let $S=\left \{v \in V \mid q_v = 1\right\}$. Then we have that $|S| \geq r|V| -1$. Furthermore, for each edge $e$  in $N(S,S) \subseteq E$  of weight $W_e$   we have $m^{2}$  edges in $N(S',S') \subseteq E'	$ of total weight $W_e m^{2}$. Hence, $w(S',S') \geq	w(S,S)$. Finally, since $S \subseteq V$  and graph $G$  satisfies the soundness property, we have that $w(S,S) \geq g(r-1/|V|)$. Hence, we have that 
		\begin{equation*}
w(S',S')  \geq 	g(r-1/|V|).
		\end{equation*}
		By the mean value theorem, for any $x,y \in [0,1], x<y$, we have
		\begin{equation*}
			|g(x) - g(y)| \leq \sup_{t \in [x,y]} g'(t) (y-x).
		\end{equation*}
		 Since $ \left|g'(z)\right| \leq 1$ for all $z \in [0,1]$, by taking $x=r-1/|V|, y = r$   we get $\left|g(r-1/|V|) -  g(r) \right| \leq 1/|V|$, and therefore
		\begin{equation*}
			|w(S',S') | \geq 	g(r)-1/|V|.
		\end{equation*}
		Let us now resolve case (b). Our proof strategy is as follows. We first fix two vertices, $u,v$ such that $0<q_u,q_v<1$. Then, we consider two possible actions:
		\begin{itemize}
			\item \emph{A1:} Moving as many vertices from $A_u\cap S'$ to $A_v\cap S'$ as possible. This would either make $q_u=0$ or $q_v=1$. In particular, we would move $\min(q_u,1-q_v)\cdot m$  vertices from $A_u$ to $A_v$.
			\item \emph{A2:} Moving as many vertices from $A_v \cap S'$ to $A_u \cap S'$ as possible. This would either make $q_v=0$  or $q_u=1$. In particular, we would move $\min(q_v,1-q_u)\cdot m$  vertices from $A_v$ to $A_u$. 
		\end{itemize}
		First and the second operation will create vertex sets $S'_1,S'_2$, respectively, for which $f(S'_1) < f(S)$ and $f(S'_2) <f(S)$.  We will furthermore show that we have 
		\begin{equation*}
			\left|	w(S_1',S_1') \right| \leq  \left|w(S',S') \right| \quad \textrm{ or } \quad  \left| w(S_2',S_2') \right| \leq \left| w(S',S') \right|.
		\end{equation*}
		We then pick $T_1$   to be $S_i'$  with the smaller $\left|w(S_i',S_i')\right|$. Then we have that $f(T_1) < f(S) $, $\left|w(T_1,T_1) \right| \leq  \left|w(S',S')\right|$. Repeating this procedure finite number of steps, we will get $T_j \subseteq  V',j \in \mathbb{N}$, such that 
		\begin{equation*}
			f(T_j) \in \left \{0,1\right\}, \quad |T_j| = r |V'|, \quad |w(T_j,T_j)| \leq  \left| w(S',S') \right|.
		\end{equation*}
		Then, by the point (a) we have that 
		\begin{equation*}
			|w(T_j,T_j)| \geq  g(r) - 1/|V|,
		\end{equation*}
			and therefore 
			\begin{equation*}
				\left| w(S',S') \right| \geq g(r)-1/|V|.
			\end{equation*}
		It remains to show that one of the sets $S_1'$ or $S_2'$, created by actions \emph{A1} or \emph{A2}, respectively, will be such that 
		\begin{equation*}
			\left|	w(S_1',S_1') \right| \leq  \left|w(S',S') \right| \quad \textrm{ or } \quad  \left| w(S_2',S_2') \right| \leq \left| w(S',S') \right|.
		\end{equation*}
		We equivalently show that 
		\begin{equation*}
			\left|	W(S_1',S_1') \right| \leq  \left|W(S',S') \right| \quad \textrm{ or } \quad  \left| W(S_2',S_2') \right| \leq \left| W(S',S') \right|.
		\end{equation*}
		The main idea relies on the following observation. If we denote with $h$ the amount of vertices moved from $A_u$ to  $A_v$, where we allow $h<0$ to indicate moving vertices from  $A_v$ to $A_u$, then the difference in the number of edges obtained by moving $h$ vertices is a concave function, and hence it attains its minimum at the edge of an interval it is defined over. Let us give the details now. We use $W_e$ to denote the weight of an edge between $u$  and $v$. In case there is no edge between $u$  and $v$  we set $W_e =0$. Furthermore, let us define
		\begin{equation*}
			W_1 = \left| W(	A_u, S' \setminus A_v)\right| , \quad W_2 =\left| W(	A_v, S' \setminus A_u)\right|.
		\end{equation*}
		In particular, $W_1$ 	is the total weight of edges from $A_u$ to the rest of $S'$ excluding edges with endpoints in $A_v$, and $W_2$  is defined analogously for $A_v$. Then, moving $\min(q_u,1-q_v)\cdot m$ by action \emph{A1}  will create the following changes in $|W(S_1',S_1')|$  compared to $|W(S',S')|$:
		\begin{itemize}
			\item Removing edges between $A_u \cap S'$   and  $A_v \cap S'$, which eliminates edges of weight $W_e \cdot \min(q_u,1-q_v)\cdot q_v \cdot m^{2}$. 
			\item Removing edges between $A_u \cap S'$ and $S' \setminus A_v$, which eliminates edges of weight $W_1 \cdot \min(q_u,1-q_v)$.
			\item Adding edges between $A_v \cap S'$ and $S' \setminus A_u$, which adds edges of weight $W_2 \cdot \min(q_u,1-q_v)$.
		\end{itemize}
		Hence, we have that 
		\begin{equation*}
			\begin{split}
				|W(S_1',S_1')| = |W(S',S')| -W_e \cdot \min(q_u,1-q_v)\cdot q_v \cdot m^{2} - W_1 \cdot \min(q_u,1-q_v) +W_2 \cdot \min(q_u,1-q_v) \\ 
				= |W(S',S')| -W_e \cdot \min(q_u,1-q_v)\cdot q_v \cdot m^{2} +  \min(q_u,1-q_v) (-W_1 +W_2).
			\end{split}
		\end{equation*}
		Similarly, we get that 
		\begin{equation*}
			|W(S_2',S_2') | 	= |W(S',S')| -W_e \cdot \min(q_v,1-q_u)\cdot q_u \cdot m^{2}+  \min(q_v,1-q_u) (-W_2 +W_1).
		\end{equation*}
		Since either $W_1-W_2 \leq 0$  or $W_2 - W_1 \leq 0$  we have that either 
	\begin{equation*}
		|W(S_1',S_1')|  \leq |W(S',S')|,
	\end{equation*}
	or	
	\begin{equation*}
		|W(S_2',S_2')|  \leq |W(S',S')|,
	\end{equation*}
	which concludes our proof.
	\end{proof}		 
\end{lemma}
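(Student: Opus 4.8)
The plan is to prove the two properties independently: the completeness bound is a direct lifting of the ordering from $G$ to $G'$, while soundness rests on a symmetrization argument that rounds a fractional vertex selection to an integral one.

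\emph{Completeness.} I would take an ordering $\sigma$ of $V$ with $\SVC_G(\sigma)\le\tau|V|W_G(E)$ and lift it to the ordering $\sigma'$ of $V'$ that visits the $m$ copies of $\sigma(1)$ (in arbitrary order), then the $m$ copies of $\sigma(2)$, and so on, so that the copies of $\sigma(t)$ occupy steps $(t-1)m+1,\dots,tm$. If an edge $e=(u,w)$ of $G$ satisfies $c_{\sigma,e}=t$, with $u=\sigma(t)$, then every one of the $m^{2}$ edges of $G'$ replacing $e$ is covered at the step at which its endpoint in $A_u$ is visited, a value ranging over $\{(t-1)m+1,\dots,tm\}$; hence these $m^{2}$ edges have average cover time $c_{\sigma,e}\cdot m+\frac{1-m}{2}\le c_{\sigma,e}\cdot m$. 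Summing over all $e$, each with weight $W_e$ and multiplicity $m^{2}$, gives $\MSVC(G')\le m^{3}\sum_{e}c_{\sigma,e}W_e=m^{3}\MSVC(G)\le m^{3}\tau|V|W_G(E)$, and the claim follows from $|V'|=m|V|$ and $W_{G'}(E')=m^{2}W_G(E)$.

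\emph{Soundness.} Fix $S'\subseteq V'$ with $|S'|=r|V'|$ and, for each $v\in V$, set $q_v=|S'\cap A_v|/m$, where $A_v$ is the set of $m$ copies of $v$; call $A_v$ \emph{fractional} if $q_v\in(0,1)$ and let $f(S')$ be the number of fractional classes. The target is to replace $S'$ by a set with at most one fractional class and no larger relative internal weight. For such a set, $S=\{v:q_v=1\}\subseteq V$ is an honest vertex subset with $|S|\ge r|V|-1$ and $|S|\le r|V|$; since each weight-$W_e$ edge of $N(S,S)$ yields $m^{2}$ edges of weight $W_e$ inside $S'$ while $W_{G'}(E')=m^{2}W_G(E)$, we get $w_{G'}(S',S')\ge w_G(S,S)\ge g\!\left(|S|/|V|\right)$ by applying the density of $G$ to $S$, and then $|g'|\le1$ together with the mean value theorem give $g\!\left(|S|/|V|\right)\ge g(r)-1/|V|$. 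To carry out the replacement, pick two fractional classes $A_u,A_v$ and move $h$ copies between them (with $h>0$ meaning $A_u\to A_v$); tracking the three affected edge groups, namely those inside $A_u\times A_v$, those from $A_u$ to $S'\setminus(A_u\cup A_v)$, and those from $A_v$ to $S'\setminus(A_u\cup A_v)$, the change in total internal weight equals $\Delta(h)=-W_e h^{2}+\beta h$ for a constant $\beta$, a concave function of $h$, hence minimized at an endpoint of the feasible interval. The two endpoints are exactly the moves that saturate $A_u$ into $A_v$ or $A_v$ into $A_u$; since $\Delta(0)=0$, at least one endpoint has $\Delta\le0$, and it turns at least one of $A_u,A_v$ integral, so $f$ strictly decreases. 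Iterating at most $|V|$ times reaches the base case.

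The main obstacle I expect is the soundness symmetrization: one must verify that the move never increases the total internal weight and, crucially, that it strictly decreases $f(S')$ so the procedure terminates, which requires the explicit concavity of $\Delta$ in the number $h$ of moved copies and careful bookkeeping of which class becomes integral under each of the two saturating moves, with the move amounts $\min(q_u,1-q_v)m$ and $\min(q_v,1-q_u)m$. Everything else, namely the cover-time arithmetic in the completeness step and the mean-value estimate in the soundness step, is routine.
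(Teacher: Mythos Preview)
Your proposal is correct and follows essentially the same route as the paper: the completeness lifting and the cover-time arithmetic are identical, and for soundness both you and the paper reduce to the case $f(S')\le 1$ by an exchange argument between two fractional classes, observing that the change in internal weight as a function of the number $h$ of moved copies is concave so that one of the two saturating moves does not increase $W(S',S')$. The only cosmetic difference is that the paper, after stating the concavity idea, writes out the two endpoint values explicitly and concludes via the sign of $W_1-W_2$, whereas you invoke concavity and $\Delta(0)=0$ directly; these are the same argument.
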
 
In the next step of our reduction we replace each complete graph between $A_u,A_v,$ with a regular unweighted graph between the same vertices, by sampling the edges at an appropriate density. In particular, we show the following lemma.
\begin{lemma}\label{gadget_lemma}
	Let $q \in \mathbb{N}$, and let $0<\varepsilon<1/2, 0 < W_{\min} < W_{\max} < 1$. Furthermore, let $\varepsilon$ be rational number whose denominator divides $q$. Then there is sufficiently large $m \in \mathbb{N}$, such that the following holds. For each complete weighted bipartite graph	$H=(A_u \cup A_v,E_H)$ where $|A_u |= |A_v |= m$, in which each edge has weight $W_e$, and $W_e \in  [W_{\min},W_{\max}]$ is a rational number whose denominator divides $q$, there is an unweighted graph $H'=(A_u \cup A_v,E_{H'})$ such that for each  $S' \subseteq A_u \cup A_v$ we have 
	\begin{equation*}
		\left|W_{H}(S',S') - W_{H'}(S',S') \right| \leq 3 \varepsilon m^{2} W_e. 
	\end{equation*}
	Furthermore, each vertex in the graph $H'$ has degree $m W_e \cdot (1+\varepsilon)$. 
	\begin{proof}
		We prove this theorem using probabilistic method.	In particular, let us consider a random biparite graph $\bar{\mathcal{H}}$ on vertices $A_u \cup A_v$, obtained from $H$ by sampling each edge $e$ of $H$ of weight $W_e$ with probability $W_e$. Consider $S=S_u \cup S_v, S \subseteq A_u \cup A_v$, and let us use $Q(S)$ to denote the event
		\begin{equation*}
			Q(S):= \left|W_{\mathcal{H}}(S,S) - W_{H}(S,S) \right| > \varepsilon  W_e m^{2}.
		\end{equation*}
Let us enumerate weighted edges between $S_u$ and $S_v$  by $i=1,\hdots,|S_u|\cdot |S_v|$, and use $\{X_i\}_{i=1}^{|S_u|\cdot |S_v|}$  as the indicator functions for these edges in $\mathcal{H}$. Each $X_i$  is a random variable with $\E[X_i]=W_e$. All $X_i$ are independent, and $\E\left[\sum_{i=1}^{|S_u|\cdot |S_v|} X_i \right] = W_e |S_u|\cdot |S_v|= W_H(S,S) $. Also, $W_{\mathcal{H}}(S,S) = \sum_{i=1}^{|S_u|\cdot |S_v|} X_i$. We can then write
		\begin{equation*}
			\begin{split}
			\Pr[Q(S)] =\Pr\left[\left|W_{\mathcal{H}}(S,S) - W_{H}(S,S) \right| > \varepsilon W_e m^{2}\right] 
			= \Pr\left[ \left| \sum_{i=1}^{|S_u|\cdot |S_v|} X_i -  W_e |S_u|\cdot |S_v|  \right|  \geq \varepsilon W_e m^{2} \right] \\
				\leq 2 e^{- \frac{W_e^{2} \varepsilon^{2} m^{4} }{|S_u|\cdot |S_v| } },
			\end{split}
		\end{equation*}
		where in the last inequality we used Hoeffding's inequality \eqref{hoef_bound_eq}. Now, since $|S_u|\cdot |S_v| \leq m^{2}$  we have that 
		\begin{equation*}
				\Pr[Q(S)] \leq 2 e^{- \varepsilon^{2} W_e^{2} m^{2} }.
		\end{equation*}
		Since there are $2^{2m}$ different subsets $S \subseteq A_u \cup A_v$, we have by the union bound that 
		\begin{equation}\label{hoefding_regularity}
			\Pr\left[ \left(\exists S \subseteq A_u \cup A_v \right)	Q(S)\right] \leq 2^{2 m} \cdot 2 e^{- W_e^{2} \varepsilon^{2} m^{2} }.
		\end{equation}
		Note that this probability goes to $0$ as $m\to \infty$. Let us also show that the degrees of vertices in $\mathcal{H}$  will be close to $W_e m$  with high probability. Fix any $z \in A_u \cup A_v$, and let us enumerate edges in $H$  with one endpoint in $z$   by $1,\hdots,m$, and indicator functions of those edges in $\mathcal{H}$  by $X_1,\hdots,X_m$. Let $Q(z)$   be the following event:
		\begin{equation*}
			Q(z):= \left|\sum_{i=1}^{m}X_i - W_e \cdot m \right| > \varepsilon W_e \cdot m.
		\end{equation*}
		Since $X_i$ are independent and $\E[\sum_{i=1}^{m}X_i ] = W_e \cdot m$ by the Hoeffding's inequality \eqref{hoef_bound_eq} we have that 
		\begin{equation*}
			\Pr[Q(z)] \leq 2 e^{-\frac{\varepsilon^{2} W_e ^{2} \cdot m^{2}}{m}}  =2 e^{-\varepsilon^{2} W_e ^{2} \cdot m} .
		\end{equation*}
			By the union bound we have that 
		\begin{equation}\label{hoefding_degree}
			\Pr[(\exists z \in A_u \cup A_v ) Q(z)] \leq 2 m \cdot 2 e^{-\varepsilon^{2} W_e ^{2} \cdot m}.
		\end{equation}
		Then, by the union bound over \eqref{hoefding_regularity} and \eqref{hoefding_degree} we have that 
		\begin{equation}
			\Pr[(\exists z \in A_u \cup A_v ) Q(z) \textrm{ or } \left(\exists S \subseteq A_u \cup A_v \right)	Q(S)] \leq 
		2 m \cdot 2 e^{-\varepsilon^{2} W_e ^{2} \cdot m} 
			+2^{2 m+1} e^{- W_e^{2} \varepsilon^{2} m^{2} }.
		\end{equation}
		Now, for sufficiently large $m$ we have that 
		\begin{equation}\label{final_graph_equation}
			\Pr[(\exists z \in A_u \cup A_v ) Q(z) \textrm{ or } \left(\exists S \subseteq A_u \cup A_v \right)	Q(S)] <1.
		\end{equation}
		Furthermore, since $W_e \geq W_{\min}$, there is a choice of $m$  such that the equation \eqref{final_graph_equation} holds regardless of the exact choice for $W_e \in [W_{\min},W_{\max}]$, and such that $W_e \cdot \varepsilon\cdot m$  is integer. For such $m$, one can sample a graph $\bar{H}$ from $\mathcal{H}$ such that for each $S \subseteq A_u \cup A_v$ we have 
		\begin{equation*}
			\left|W_{\bar{H}}(S,S) - W_{H}(S,S) \right| < \varepsilon W_e m^{2}
		\end{equation*}
		and for each $z \in A_u \cup A_v$ 	we have $|\deg(z) - W_e m| \leq \varepsilon W_e m$. Let us fix one such graph $\bar{H}$, and match vertices with degree smaller than $(1+\varepsilon) \cdot W_e m $ to obtain a bipartite graph $H'$ in which all vertices have degree $(1+\varepsilon)W_e \cdot m$. Observe that we add at most $ 2 \varepsilon W_e m^{2}$ edges in this step.  
		\par Finally, let us show that $H'$ has the property expressed in the statement of this lemma. Obviously, the graph $H'$ is regular with degrees of vertices being $(1+\varepsilon)W_e m$. It remains to show that for an arbitrary set  $S \subseteq A_u \cup A_v$ we have 
		\begin{equation*}
			\left|W_{H}(S,S) - W_{H'}(S,S) \right| \leq 3 \varepsilon m^{2} W_e. 
		\end{equation*}
		We have that 	
		\begin{equation*}
			\begin{split}
\left|W_{H}(S,S) - W_{H'}(S,S) \right| =   			\left|W_{H}(S,S) - W_{\bar{H}}(S,S) +  W_{\bar{H}}(S,S) - W_{H'}(S,S) \right| \\
				\leq \left|W_{H}(S,S) - W_{\bar{H}}(S,S) \right|+ \left| W_{\bar{H}}(S,S) - W_{H'}(S,S) \right|.
			\end{split}
		\end{equation*}
		Now we have that 	
		\begin{equation*}
			\left|W_{H}(S,S) - W_{\bar{H}}(S,S) \right| \leq \varepsilon W_e m^{2}
		\end{equation*}
since we have insured that the property $Q$  does not hold for any subset $S$  in $\bar{H}$. Furthermore, since we added at most $2\varepsilon W_e m^{2}$  edges we have that 
\begin{equation*}
	\left| w_{\bar{H}}(S,S)-  w_{H'}(S,S) \right| \leq 2 \varepsilon W_e m^{2},
\end{equation*}
which concludes the proof.
	\end{proof}
\end{lemma}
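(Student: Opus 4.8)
The plan is to construct $H'$ by the probabilistic method, exactly as one builds a quasirandom bipartite graph of a prescribed edge density. First I would let $\mathcal H$ be the random bipartite graph on $A_u \sqcup A_v$ obtained by keeping each of the $m^2$ potential edges independently with probability $W_e$. Since $H$ is complete bipartite with all weights $W_e$, for $S = S_u \sqcup S_v$ (with $S_u \subseteq A_u$, $S_v \subseteq A_v$) we have $W_H(S,S) = W_e\,|S_u|\,|S_v|$, whereas $W_{\mathcal H}(S,S)$ is a sum of $|S_u|\,|S_v| \le m^2$ independent $\{0,1\}$-valued variables of mean $W_e$. Hoeffding's inequality (Lemma~\ref{hoeff_bound}) with $t = \varepsilon W_e m^2$, $b = 1$ and $k = |S_u|\,|S_v|$ gives
\[
\Pr\!\left[\,\bigl| W_{\mathcal H}(S,S) - W_H(S,S) \bigr| \ge \varepsilon W_e m^2 \,\right] \le 2 e^{-\varepsilon^2 W_e^2 m^2},
\]
and the same bound applied to the $m$ potential edges at a fixed vertex $z$ yields $\Pr[\,|\deg_{\mathcal H}(z) - W_e m| \ge \varepsilon W_e m\,] \le 2 e^{-\varepsilon^2 W_e^2 m}$.

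Next I would take a union bound over the $2^{2m}$ subsets $S$ and the $2m$ vertices, and use $W_e \ge W_{\min} > 0$: the total failure probability is at most $2^{2m+1} e^{-\varepsilon^2 W_{\min}^2 m^2} + 4m\, e^{-\varepsilon^2 W_{\min}^2 m}$, which is $< 1$ once $m$ exceeds a threshold depending only on $\varepsilon$ and $W_{\min}$ — and, crucially, not on the particular $W_e$, which is precisely why the uniform bounds $W_{\min},W_{\max}$ are imposed. Hence there is a concrete graph $\bar H$ that is $(\varepsilon W_e m^2)$-discrepant on every $S$ and all of whose degrees lie within $\varepsilon W_e m$ of $W_e m$. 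Enlarging $m$ to also be a multiple of $q^2$, and using that $\varepsilon$ and $W_e$ have denominators dividing $q$, makes $W_e m$, $\varepsilon W_e m$ and the target degree $d^\star := (1+\varepsilon) W_e m$ all integers.

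Finally I would regularize $\bar H$. Every vertex has deficiency at most $2\varepsilon W_e m$ toward $d^\star$, and the total deficiency is the same on both sides (each equals $m d^\star - |E(\bar H)|$); since these numbers are tiny compared with $m$ — and, in the regime used, $W_e(1+\varepsilon) < 1$, so the bipartite complement of $\bar H$ is dense — one can repeatedly add a perfect matching between the still-deficient vertices of the two sides avoiding edges already present (a Hall / Gale--Ryser argument), reaching a $d^\star$-regular graph $H'$ after at most $2\varepsilon W_e m$ rounds, i.e.\ after adding at most $2\varepsilon W_e m^2$ edges in total. The claim then follows by the triangle inequality: for every $S \subseteq A_u \cup A_v$,
\[
\bigl| W_H(S,S) - W_{H'}(S,S) \bigr| \le \bigl| W_H(S,S) - W_{\bar H}(S,S) \bigr| + \bigl| W_{\bar H}(S,S) - W_{H'}(S,S) \bigr| \le \varepsilon W_e m^2 + 2\varepsilon W_e m^2 = 3\varepsilon W_e m^2,
\]
the last term counting only the added edges inside $S$. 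The main obstacle is getting the union bound to survive: with $2^{2m}$ subsets in play we need the per-subset failure probability below $2^{-2m}$, which forces the allowed discrepancy to be proportional to $m^2$ (hence $e^{-\Theta(m^2)}$ from Hoeffding) rather than to $m$; the only remaining delicate point is the bookkeeping that completing the near-regular $\bar H$ to an exactly $d^\star$-regular simple graph inflates the discrepancy by at most the extra $2\varepsilon W_e m^2$.
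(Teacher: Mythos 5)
Your proposal follows the paper's argument essentially step for step: sample each potential edge independently with probability $W_e$, apply Hoeffding plus a union bound over the $2^{2m}$ subsets and $2m$ vertices (using $W_e \ge W_{\min}$ to make the threshold for $m$ uniform over the finitely many admissible weights), pick one good outcome $\bar H$, regularize it up to degree $(1+\varepsilon)W_e m$ by adding at most $2\varepsilon W_e m^2$ edges, and finish by the triangle inequality. The one place where you add content is the regularization step, where the paper merely says to ``match'' the deficient vertices; your Hall/Gale--Ryser justification is the right way to make that rigorous (and the implicit requirement $(1+\varepsilon)W_e < 1$, which you flag, does hold in the paper's application since the weights there are exponentially small in $L$), so this is a harmless elaboration rather than a different proof.
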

Let us now use this setup to prove Theorem \ref{regular_main_theorem}. The idea is to replace each complete bipartite graph between pairs of sets of vertices $A_u,A_v$ from Lemma \ref{lemma_G'}, with the gadget constructed above, to obtain the regular graph which satisfies \emph{essentially} the same soundness/completeness properties. In particular, we prove the following lemma.
\begin{lemma} \label{main_weight_theorem}
	Let $0<\varepsilon < 1/2, \rho\in (-1,0)$ be rational numbers, and let $\gamma > 0$. Then there is a sufficiently large alphabet size $L \in \mathbb{N}$ and a reduction from Affine Unique Games instances $\Lambda=(\mathcal{U},\mathcal{V},\mathcal{E},\Pi,[L])$ to unweighted simple graphs $G'' = (V', E'')$ with the following properties:
  \begin{itemize}
\item \emph{Completeness:} If $\Opt(\Lambda) \geq 1-\gamma$, then 
	$MSVC(G'') \leq \left(\frac{1}{3-\rho}+\varepsilon+3 \gamma\right) |V'| W_{G''}(E'') + 3 \varepsilon W_{G''}(E'')$.
\item \emph{Soundness:} If $\Opt(\Lambda) \leq \gamma$, then $\MSVC(G'') \geq  \frac{1}{1+\varepsilon} \left(\int_{0}^1 \Gamma_{\rho}(r) dr  -2\varepsilon+o(1)\right)\cdot |V'|W_{G''}(E'')  -3 \varepsilon |V'|W_{G''}(E'')$.
  \end{itemize}
	Moreover, the running time of the reduction is polynomial in $|\mathcal{U}|,|\mathcal{V}|,|\mathcal{E}|$, and exponential in $L$, and the graph $G''$ is regular.
\end{lemma}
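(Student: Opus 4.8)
The plan is to compose the three reductions established above. Given a regular Affine Unique Games instance $\Lambda$, first apply Theorem \ref{hardness_min_sum_vc} with parameters $\varepsilon'\le\varepsilon$, the same $\rho$, and the same $\gamma$, obtaining a weighted regular multigraph $G=(V,E)$ that in the completeness case satisfies $\MSVC(G)\le(\tfrac{1}{3-\rho}+\varepsilon'+3\gamma)|V|W_G(E)$ and in the soundness case is $(r,\Gamma_{\rho}(r)-\varepsilon')$-dense for every $r\in[0,1]$; by Fact \ref{fact:rho_der} the density function $g(r)=\Gamma_{\rho}(r)-\varepsilon'$ satisfies $|g'|\le 1$, so Lemma \ref{lemma_G'} is applicable. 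Since $\rho$ is rational the edge weights of $G$ form a finite set of rationals with a common denominator, so after merging parallel edges (summing their weights), rounding $\varepsilon$ to a rational with denominator dividing a suitable common $q$ at a cost of at most $1/q$, and rescaling all weights by a fixed constant so that the merged weights lie in some $[W_{\min},W_{\max}]\subseteq(0,1)$ with $(1+\varepsilon)W_{\max}\le 1$, the weighted degree $W_G(v,N(v))$ stays uniform in $v$. Next apply Lemma \ref{lemma_G'} with a parameter $m$ to be fixed at the end, producing $G'=(V',E')$ with $|V'|=m|V|$, $W_{G'}(E')=m^2W_G(E)$, which preserves the normalized completeness bound and is $(r,\Gamma_{\rho}(r)-\varepsilon'-1/|V|)$-dense in the soundness case. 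Finally, replace each complete weighted bipartite graph between two blocks $A_u,A_v$ by the unweighted regular bipartite gadget of Lemma \ref{gadget_lemma} (choosing $m$ large enough that a single $m$ works for all the finitely many distinct weights), obtaining an unweighted simple graph $G''=(V',E'')$ with $|W_{G''}(S,S)-W_{G'}(S,S)|\le\sum_e 3\varepsilon m^2 W_e=3\varepsilon W_{G'}(E')$ for every $S\subseteq V'$.

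For regularity, note that the degree in $G''$ of any vertex in a block $A_v$ (with $v$ a vertex of $G$) equals $\sum_{e\ni v}(1+\varepsilon)m W_e=(1+\varepsilon)m\,W_G(v,N(v))$, which is uniform; hence $G''$ is regular and $W_{G''}(E'')=(1+\varepsilon)m^2 W_G(E)=(1+\varepsilon)W_{G'}(E')$. For completeness, let $\sigma$ be optimal for $\MSVC(G)$ and let $\sigma''$ visit the $m$ duplicates of $\sigma(1)$, then those of $\sigma(2)$, and so on. Every edge of the gadget replacing an edge $e$ of $G$ is covered by $\sigma''$ at some time in $\big((c_{\sigma,e}-1)m,\,c_{\sigma,e}m\big]$, and since this gadget is regular the total cover time of its $(1+\varepsilon)m^2W_e$ edges equals $(1+\varepsilon)m^2W_e\big[(c_{\sigma,e}-1)m+\tfrac{m+1}{2}\big]$. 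Summing over $e$, using $\tfrac{1-m}{2}\le 0$ and $\sum_e W_e c_{\sigma,e}=\MSVC(G)$, gives
\begin{equation*}
\MSVC(G'')\le\SVC_{G''}(\sigma'')\le(1+\varepsilon)m^3\MSVC(G)\le\Big(\tfrac{1}{3-\rho}+\varepsilon'+3\gamma\Big)|V'|W_{G''}(E''),
\end{equation*}
which with $\varepsilon'\le\varepsilon$ is stronger than the claimed completeness bound, the additive slack $3\varepsilon W_{G''}(E'')$ leaving room for the $\varepsilon$-rounding and any self-loop corrections.

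For soundness, fix any ordering $\sigma$ of $V'$ and write $\SVC_{G''}(\sigma)=\sum_{t=1}^{|V'|}W_{G''}(\sigma([t])^c,\sigma([t])^c)$. Since $\sigma([t])^c$ has fractional size $1-t/|V'|$, the density of $G'$ together with the gadget estimate gives
\begin{equation*}
W_{G''}(\sigma([t])^c,\sigma([t])^c)\ge\frac{1}{1+\varepsilon}\Big(\Gamma_{\rho}(1-t/|V'|)-\varepsilon'-\tfrac{1}{|V|}\Big)W_{G''}(E'')-\frac{3\varepsilon}{1+\varepsilon}W_{G''}(E'').
\end{equation*}
Summing over $t$, using the Riemann-sum identity $\sum_{t=1}^{|V'|}\Gamma_{\rho}(1-t/|V'|)=|V'|\int_0^1\Gamma_{\rho}(r)\,dr+O(1)$ (valid since $|\Gamma_{\rho}'|\le 1$) and $1/|V|\le 2^{-L}=o(1)$, yields
\begin{equation*}
\MSVC(G'')\ge\frac{1}{1+\varepsilon}\Big(\int_0^1\Gamma_{\rho}(r)\,dr-2\varepsilon+o(1)\Big)|V'|W_{G''}(E'')-3\varepsilon|V'|W_{G''}(E''),
\end{equation*}
taking $\varepsilon'\le\varepsilon$ so that the remaining slack absorbs $\varepsilon'+1/|V|$. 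Each step of the pipeline runs in time polynomial in $|\mathcal U|,|\mathcal V|,|\mathcal E|$ and exponential in $L$: the gadget is made deterministic by exhaustive search over the $2^{2m}$ subsets of $A_u\cup A_v$, and $m$ is a constant once $\varepsilon,\rho,\gamma$ (hence $L$ and $q$) are fixed.

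The step I expect to be the main obstacle is the bookkeeping around the weighted multigraph structure: merging parallel edges (and handling any self-loops) so that Lemma \ref{gadget_lemma} genuinely applies to complete bipartite graphs carrying a single rational weight in $(0,1)$ of the required denominator, checking that the duplicate-then-gadget pipeline preserves \emph{exact} regularity, and then propagating the multiplicative and additive $\varepsilon$-losses cleanly through the two normalizations — by $|V|$ and by $W(E)$ — which is exactly where the $\tfrac{1}{1+\varepsilon}$ factor and the separate $-3\varepsilon|V'|W_{G''}(E'')$ term in the soundness bound originate.
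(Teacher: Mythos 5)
Your proposal is correct and follows the same overall pipeline as the paper: apply Theorem~\ref{hardness_min_sum_vc}, duplicate vertices via Lemma~\ref{lemma_G'}, replace each weighted complete bipartite block with the unweighted regular gadget of Lemma~\ref{gadget_lemma}, and propagate the losses. Where you diverge is in the completeness analysis. The paper routes through $G'$: it invokes the completeness clause of Lemma~\ref{lemma_G'} to bound $\MSVC(G')$ and then transfers to $G''$ via the uniform perturbation bound $|\SVC_{G''}(\sigma)-\SVC_{G'}(\sigma)|\le 3\varepsilon|V'|W_{G''}(E'')$, which is what produces the extra additive $3\varepsilon$ term in the statement. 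You instead analyze $\sigma''$ on $G''$ directly, using that the gadget for an edge $e$ is regular and that all of its edges are covered in the time window $\bigl((c_{\sigma,e}-1)m,\,c_{\sigma,e}m\bigr]$, so the per-gadget cover cost can be computed exactly and summed; this avoids the detour through $G'$ and in fact gives $\MSVC(G'')\le\bigl(\tfrac{1}{3-\rho}+\varepsilon'+3\gamma\bigr)|V'|W_{G''}(E'')$ with no additive slack, strictly stronger than what the lemma claims. Your soundness argument matches the paper's essentially line for line. You are also somewhat more careful than the paper on two bookkeeping points it glosses over: merging parallel edges of the multigraph $G$ (so that the product graph is simple, as the statement requires) and normalizing the weight range so that the degree $(1+\varepsilon)W_e m$ in Lemma~\ref{gadget_lemma} does not exceed $m$. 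Both approaches are sound; yours buys a cleaner constant in completeness at the cost of a slightly longer direct calculation, while the paper's is more modular in reusing Lemma~\ref{lemma_G'} verbatim for both directions.
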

\begin{proof}
	For fixed $0<\varepsilon < 1/2, \rho\in (-1,0),\gamma > 0$ let us choose $L$ large enough so that Theorem \ref{hardness_min_sum_vc} holds. Since $\rho<0$ the weights will be in an interval $\left[ \left(\frac{1+\rho}{4}\right)^{L},\left(\frac{1-\rho}{4}\right)^{L}\right]$, and there are finitely many of them. Hence, we can find a constant $q$ for which there will be sufficiently large constant $m$ such that the conditions of Lemma \ref{gadget_lemma} hold for all weights $\left \{ \left(\frac{1+\rho}{4}\right)^{i} \left(\frac{1-\rho}{4}\right)^{L-i}\right\}_{i=0}^{L}$. 
		\par
		Let us replace each edge $e \in E$ with a weighted complete bipartite graph as in Lemma \ref{lemma_G'}, and then replace each bipartite graph with edge weights $W_e$  with a gadget from Lemma \ref{gadget_lemma}. Since there are constantly many different weights, and the size of each gadget is parametrized by $q,\varepsilon,W_{min}$ which are constants, we can deterministicaly find these gadgets in constant time. This concludes the description of the graph $G''$. 
		\par	
		Direct check shows that this reduction is polynomial in $|\mathcal{U}|,|\mathcal{V}|,|\mathcal{E}|$, and exponential in $L$. Let us now show that $G''$ is regular. Let us consider any vertex $v \in V$, and calculate its degree. As stated in Theorem \ref{hardness_min_sum_vc}, there is some value $d = D^{2} 2^{-L+1}$ independent of $v$  such that 
		\begin{equation*}
			\sum_{e \in N(v,V)} W_e =d.
		\end{equation*}
By our construction and the statement of Lemma \ref{gadget_lemma}, each edge of weight $W_e$ incident on a vertex $v$ will be replaced by $W_e m (1+\varepsilon)$  edges. Hence, the degree of $v$  will be 
		\begin{equation*}
			\sum_{e \in N(v,V)} W_e(1+\varepsilon)m= dm(1+\varepsilon)  = D^{2} 2^{-L+1} m (1+\varepsilon).
		\end{equation*}
		 Therefore, the graph $G''$  is indeed regular. 
		\par
		Finally, let us show the soundness and completeness properties of $G''$. Due to Lemma \ref{gadget_lemma}, for each $S \subseteq V'$, we have that 
		\begin{equation*}
\left|W_{G''}(S,S) -W_{G'}(S,S) \right| \leq \sum_{e in E} 3 \varepsilon m^{2} W_e \leq 3 \varepsilon W_{G'}(E') \leq 2 \varepsilon W_{G''}(E'').
		\end{equation*}
		Take any ordering $\sigma$ on the vertex set $V'$ of $G''$. Then, we have that 
			\begin{equation*}
				\SVC_{G''}(\sigma) = \sum_{t=1}^{|V'|} W_{G''}(\sigma([t])^c,\sigma([t])^c).
			\end{equation*}
			This is also an ordering on $G'$, and we can write
\begin{equation} \label{eq:comp_order}
			\begin{split}
				\left|\SVC_{G''}(\sigma)  - \SVC_{G'}(\sigma)\right| = \left|\sum_{t=1}^{|V'|} W_{G''}(\sigma([t])^c,\sigma([t])^c) - W_{G'}(\sigma([t])^c,\sigma([t])^c) \right| \leq 3 \varepsilon |V'| W_{G''}(E'').
			\end{split}
		\end{equation}
Hence, this shows $\MSVC(G'') \leq \MSVC(G') + 3 \varepsilon |V'| W_{G''}(E'')$. Now, we can use completeness from Lemma \ref{lemma_G'} 
	for $\tau=(\frac{1}{3-\rho}+\varepsilon+3 \gamma)$ to show that 
	\begin{equation*}
\MSVC(G') \leq  \left(\frac{1}{3-\rho}+\varepsilon+3 \gamma\right)\cdot W_{G'}(E') |V'| \leq  \left(\frac{1}{3-\rho}+\varepsilon+3 \gamma\right)\cdot W_{G''}(E'') |V'|,
	\end{equation*}
	where we used the fact that $ W_{G''}(E'') \geq  W_{G'}(E')$. Hence, we have that 		
	\begin{equation*}
		\begin{split}
			\MSVC(G'') \leq  \MSVC(G') +  3 \varepsilon |V'| W_{G''}(E'')
			= \left(\frac{1}{3-\rho}+\varepsilon+3 \gamma\right) |V'| W_{G''}(E'') + 3 \varepsilon W_{G''}(E''),
		\end{split}
	\end{equation*}
	which concludes the proof of the completeness property. \par
	For the soundness property we start by proving a lower bound on $\MSVC(G')$. For any ordering $\sigma$ we have 
\begin{equation*}
\SVC_{G'}(\sigma) = \sum_{t=1}^{|V'|} w_{G'}(\sigma([t])^c,\sigma([t])^c) \cdot W_{G'}(E')
\end{equation*}			
	Now, due to Fact \ref{fact:rho_der}, we can apply Lemma \ref{lemma_G'} with $g(r) = \Gamma_{\rho}(r) - \varepsilon$, to conclude that $G'$  is $(r,\Gamma_{\rho}(r)-\varepsilon-1/|V|)$ dense. Since $|V| \geq 2^{L}$, for any sufficiently large $L$  we have that $1/|V| \leq \varepsilon$ and hence the graph $G'$  is
$(r,\Gamma_{\rho}(r)-2\varepsilon)$ dense. Therefore, we have 
	\begin{equation*}
		\begin{split}
		\SVC_{G'}(\sigma)	 \geq  W_{G'}(E') \sum_{t= 1 }^{|V'| }\left( \Gamma_{\rho}(1-t/|V'|) -2\varepsilon \right)
= \left(\int_{0}^{1} \Gamma_{\rho}(1-r) dr -2 \varepsilon +o(1)\right) \cdot |V'| W_{G'}(E') \\
= \left(\int_{0}^{1} \Gamma_{\rho}(r) dr -2\varepsilon + o(1)\right) \cdot |V'| W_{G'}(E').
		\end{split}
\end{equation*}
Due to \eqref{eq:comp_order} we have 
	\begin{equation*}
		\MSVC(G'') \geq  \MSVC(G') - 3 \varepsilon |V'|W_{G''}(E'').
	\end{equation*}
	and therefore
	\begin{equation*}
		\begin{split}
\MSVC(G'') \geq \left(\int_{0}^1 \Gamma_{\rho}(r) dr  -2\varepsilon+o(1)\right)\cdot |V'|W_{G'}(E')  -3 \varepsilon |V'|W_{G''}(E'') \\
			\geq \frac{1}{1+\varepsilon} \left(\int_{0}^1 \Gamma_{\rho}(r) dr  -2\varepsilon+o(1)\right)\cdot |V'|W_{G''}(E'')  -3 \varepsilon |V'|W_{G''}(E'')  \\
		\end{split}
	\end{equation*}
where in the last inequality we used the fact that $(1+\varepsilon)\cdot W_{G'}(E') = W_{G''}(E'').$ 	This concludes our proof.
	\end{proof}
	By Lemma \ref{main_weight_theorem}, by letting $\varepsilon \to 0$, we get inapproximability ratio for MSVC
	\begin{equation*}
		\frac{\int_{0}^1 \Gamma_{\rho}(r) dr } { \frac{1}{3-\rho} } -o(1).
	\end{equation*}
Since we can choose $\rho \in (-1,0)$ arbitrarily, the best approximation ratio that we can obtain is calculated as
\begin{equation*}
	\max_{\rho \in (-1,0) } \frac{\int_{0}^1 \Gamma_{\rho}(r) dr } { \frac{1}{3-\rho} }.
\end{equation*} 
Numerical simulations show that the best inapproximability ratio we can get with these techniques is $1.0157$, and it is obtained for $\rho=-0.52$, as claimed in Theorem \ref{regular_main_theorem}.
\par 
Essentially the same argument shows that the weighted graph from Secton \ref{subsection:non_regular_hardness} can be reduced to the unweighted, albeit not regular, graph, with the similar properties, which yields the proof of Theorem \ref{main_theorem}.
\section{Approximating Min Sum Vertex Cover on Regular Graphs}	\label{section:regular_graphs}
In this section we will revisit an approximation algorithm for Minimum Sum Vertex Cover on regular graphs introduced in \cite{DBLP:journals/algorithmica/FeigeLT04}, in Theorem 11. The authors in that work did not explicitly state the approximation ratio obtained by that algorithm,  since their primary interest was showing that $4/3$-approximation achieved by the greedy algorithm can be beaten by more advanced techniques. 
\par
	We will here give an explicit constant, also taking into account progress in the approximation of the so called Max-$k$-VC problem, which is used in that approach, and for which better algorithms exist since the publication of the aforementioned article. 
	\par

	Before discussing the algorithm, let us define the Max-$k$-VC problem. In this problem a graph $G=(V,E)$  is given as an input, and the goal is to find $S \subseteq V, |S|=k$, such that $w(S,V)$  is as big as possible. Austrin, Benabbas and Georgiou~\cite{DBLP:conf/soda/AustrinBG13} show that Max-$2$-Sat with a bisection constraint, that is, Max-$2$-Sat in which admissible assignments have exactly half of the variables set to $1$, and the other half to $0$, can be approximated within $\alpha_{LLZ} \approx 0.9401$. Let us remark that $ \alpha_{LLZ}$	is the optimal\footnote{Assuming the Unique Games Conjecture} approximation ratio for the Max-$2$-Sat problem \cite{DBLP:conf/ipco/LewinLZ02,DBLP:conf/stoc/Austrin07}. Since this problem subsumes Max-$k$-VC when $k=n/2$, we can approximate Max-$n/2$-VC within $\alpha_{LLZ} \approx 0.9401$. Let us also remark at this point that the Max-$k$-VC problem for $k=n/2$  can not be approximated above $\alpha \geq 0.9431$ \cite{DBLP:conf/approx/AustrinS19}.
	\par
 Let us also recall the following two facts for regular graphs:
	\begin{itemize}
		\item The greedy algorithm on regular graphs covers edges on average by the time $\frac{1}{3}|V|$,
		\item The optimal solution covers an edge on average by the time at least $\frac{1}{4}|V|$.
	\end{itemize} \par
	Let us now discuss the algorithm introduced in \cite{DBLP:journals/algorithmica/FeigeLT04}. We will closely follow the argument outlined there. Let $\varepsilon>0$ be some constant that we will fix later. In case the optimal solution covers an edge by the time $(\frac{1}{4}+\varepsilon)|V|$  later, the greedy algorithm approximates the optimal value within a factor of 
	\begin{equation*}
		\frac{1/3}{\frac{1}{4}+\varepsilon}	 = \frac{4}{3+ 3\cdot 4 \cdot \varepsilon}.
	\end{equation*}
	Otherwise, the optimal solution covers an edge on average at the time $(\frac{1}{4}+\delta) |V|$, for some $\delta \in (0,\varepsilon)$. In this case, we have the following lemma.
	\begin{lemma}\label{reg_lemma}
		Let $G=(V,E)$	 be a regular graph, let $n:=|V|$, and let the optimal solution of Minimum Sum Vertex Cover be $(\frac{1}{4}+\delta)|V|$. Then the optimal solution covers at least $(1-\sqrt{\delta})$ fraction of edges in the first $n/2$ steps.
		\begin{proof}
		Let us denote the degree of the graph  with $D \in \mathbb{N}$, and with $m$ the number of edges $m=n D/2$. We argue by contradiction, and assume that the optimal solution covers less than $(1-\sqrt{\delta})$ fraction of edges in the first $n/2$ steps.  Let us use $u_i,i=1,\hdots,n,$ to denote the number of uncovered edges at the time step $i$, and let $s:=u_{n/2}$. Then by assumption $s > \sqrt{\delta} m $. Furthermore, the value of the minimum sum vertex cover is $\frac{1}{m}\sum_{i=1}^{n}  u_i$. Let us show that $\frac{1}{m}\sum_{i=1}^{n} u_i > (\frac{1}{4} + \delta)n$ yielding a contradiction to the assumption that the optimal solution of Minimum Sum Vertex Cover is $(\frac{1}{4}+\delta)n$.
			\par 
		Let us use $c_i=u_{i}-u_{i-1}$ to denote the number of additionally covered edges at step $i$. Since we are considering the optimal solution to MSVC, the sequence $c_i$ is non-increasing (otherwise changing the order would yield a smaller solution). Furthermore, let us use $c$ to denote $c_{n/2}$.\par
		Now, assuming that $c_{n/2}=c$, $u_{n/2}=s$, let us calculate the smallest possible value of MSVC. We know that after $i$ steps, we can cover at most $i\cdot D$ edges (this happens if all the edges chosen are disjoint). Furthermore, since we assumed that after $n/2$  steps we leave $s$  edges uncovered, and since $c=c_{n/2}$  and $c$  is non-increasing, we have that at the step $i$  we leave at least $s+(n/2-i)\cdot c$ edges uncovered. This shows that 
		\begin{equation*}
			u_i \geq \max\left(\frac{nD}{2}-iD, s+(n/2-i)\cdot c,0\right), i \in [n].
		\end{equation*}
		In particular, the right hand side is a maximum of three linear functions, and therefore, the following scenario for covering the edges will lower bound $u_i$. In the first $t$ fraction of steps, edges get covered at the optimal rate (at each step we cover $D$ new edges), where $t$ is a parameter calculated later. Then, after $t$ fraction of steps, we cover $c$ edges at each step, until we cover all the edges\footnote{In the last step we might cover less than $c$ edges, but we will ignore this case for the sake of simplicity.}.   \par
			\par
			Since we spend $t$ fraction of time covering $D$ edges in each step, the cost of edges covered in this time is
			\begin{equation*}
				\frac{1}{m}\sum_{i=1}^{t\cdot n} \frac{Dn}{2} - i D \geq   n t(1-t).
			\end{equation*}
			The remaining time $x\cdot n$, for some $x \in (0,1)$, is spent on covering $c$ edges at each step. Since after $x$ fraction of steps we covered all the edges, we have 
			\begin{equation*}
				m = t \cdot n \cdot D + x \cdot n \cdot c,
			\end{equation*}
			and since $m = nD/2$	 we have that 
			\begin{equation}\label{x_calc}
				x =\frac{D}{2}	\cdot \frac{1-2t}{c}.
			\end{equation}
			Hence, the average cost of edges incurred in the remaining time is 
			\begin{equation*}
				\frac{1}{m} x \cdot n \cdot (m-t\cdot n \cdot D) \frac{1}{2}= x n \left(\frac{1}{2}-t\right),
			\end{equation*}
			which with 	\eqref{x_calc} yields the cost
			\begin{equation*}
				\frac{D}{2}	\cdot \frac{1-2t}{c} n \left(\frac{1}{2}-t\right)= n \frac{D}{4c}(1-2t)^{2}	.
			\end{equation*}
			Hence, the total cost is 
			\begin{equation}\label{total_cost}
				nt(1-t)+ n \frac{D}{4c}(1-2t)^{2}  
			\end{equation}
			Let us now calculate the value 	of $t$ in terms of $s$ and $c$. We use the fact that after $t$ fraction of steps we covered $t\cdot n \cdot D$ edges, and after  $n/2$ steps we covered  $m-s$ edges. Since we are covering $c$ edges at each step  $i \in [tn,n/2]$ we have that 
			\begin{equation*}
				m-s = t n D + (\frac{n}{2}-tn)\cdot c,
			\end{equation*}
			and from here we get 
			\begin{equation*}
				t = 	\frac{m-s-\frac{n\cdot c}{2}}{n \cdot D - n\cdot c} = \frac{1}{2} - \frac{s}{n (D-c)}.
			\end{equation*}
			Replacing this in \eqref{total_cost} we get that the total cost is at least
				\begin{equation*}
					n\left(	\frac{1}{4}	-\frac{s^{2}}{n^{2}(D-c)^{2}} + \frac{D}{4c} \left( 2\frac{s}{n(D-c)}\right)^{2}\right),
				\end{equation*}
			which reduces to 
			\begin{equation*}
				n \left( \frac{1}{4} + \frac{s^{2}}{n^{2}} \frac{1}{(D-c)c} \right).
			\end{equation*}
			Now, by our contradiction hypothesis we have $s > \sqrt{\delta}m$, and $\frac{1}{(D-c)c} \geq 4\frac{1}{D^{2}}$ (since $c \in (1,D)$), we have that the total cover time is strictly greater than
			\begin{equation*}
				n \left( \frac{1}{4} +  \frac{\delta m^{2} }{n^{2}} \frac{4}{D^{2}}\right) = n \left(\frac{1}{4} + \delta\right),
			\end{equation*}
			which contradicts the fact that the optimal solution to MSVC on the graph $G$ has value $n (\frac{1}{4} + \delta)$. This concludes our proof.
		\end{proof}
	\end{lemma}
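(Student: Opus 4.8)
The plan is to argue by contradiction. Write $D$ for the degree of $G$, $m = nD/2$ for the number of edges, and let $u_i$ denote the number of edges still uncovered after the first $i$ steps of the optimal schedule, so $u_0 = m$, $u_n = 0$, and --- in the normalization of the statement --- the optimal value equals $\tfrac1m \sum_{i=1}^n u_i$. Put $s := u_{n/2}$. If the conclusion fails, then fewer than a $(1-\sqrt\delta)$ fraction of the edges are covered in the first $n/2$ steps, i.e.\ $s > \sqrt\delta\, m$, and I want to deduce $\tfrac1m \sum_{i=1}^n u_i > (\tfrac14 + \delta)n$, contradicting the hypothesis that the optimum is $(\tfrac14+\delta)n$.

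The core of the proof is a pointwise lower bound on the profile $(u_i)$. Let $c_i$ be the number of edges freshly covered at step $i$, so $u_i = u_{i-1} - c_i$. Each vertex has degree $D$, so $c_i \le D$ and hence $u_i \ge m - iD$. Since the schedule is optimal, $(c_i)$ is non-increasing: were $c_i < c_{i+1}$ for some $i$, swapping the vertices scheduled at steps $i$ and $i+1$ would change only $u_i$ and strictly decrease it, contradicting minimality. Writing $c := c_{n/2}$, monotonicity gives $c_j \ge c$ for $j \le n/2$, hence $u_i \ge s + (\tfrac n2 - i)c$ for $i \le n/2$; and since $m - iD \le 0$ for $i \ge n/2$, this bound in fact holds for every $i$. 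Together with $u_i \ge 0$ we obtain
\begin{equation*}
  u_i \ \ge\ \ell_i \ :=\ \max\bigl(\, m - iD,\ \ s + (\tfrac n2 - i)c,\ \ 0 \,\bigr), \qquad i \in [n].
\end{equation*}
One checks that $1 \le c \le D-1$ (if $c = 0$ then $u_{n/2} = u_n = 0$; if $c = D$ then $c_j = D$ for all $j \le n/2$, so again $u_{n/2} = 0$; both contradict $s > 0$) and that $s \le n(D-c)/2$ (since $\sum_{j \le n/2} c_j = m - s \ge \tfrac n2 c$). Hence $\ell$ is the convex piecewise-linear function that decreases at slope $-D$ up to the break-point $tn$, then at slope $-c$ down to $0$, and then stays $0$, where the break-point --- the crossing of the first two lines --- is $t = \tfrac12 - \tfrac{s}{n(D-c)} \in [0, \tfrac12]$.

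Summing $\ell$ over $i \in [n]$ is an elementary computation which, up to an additive $O(1)$ after dividing by $m$, evaluates the area under $\ell$; this gives
\begin{equation*}
  \frac1m \sum_{i=1}^n u_i \ \ge\ \frac1m \sum_{i=1}^n \ell_i \ \ge\ n\, t(1-t) + \frac{nD}{4c}(1-2t)^2 - O(1),
\end{equation*}
and substituting $t = \tfrac12 - \tfrac{s}{n(D-c)}$ collapses the main term to $\tfrac n4 + \dfrac{s^2}{n\, c(D-c)}$. Finally, the hypothesis $s > \sqrt\delta\, m$, the AM--GM bound $c(D-c) \le D^2/4$ (valid since $0 < c < D$), and $m = nD/2$ yield
\begin{equation*}
  \frac{s^2}{n\, c(D-c)} \ \ge\ \frac{4 s^2}{n D^2} \ >\ \frac{4 \delta m^2}{n D^2} \ =\ \delta n ,
\end{equation*}
so the optimal value exceeds $(\tfrac14 + \delta)n$ once the harmless lower-order slack is absorbed (which is automatic for the large instances to which the lemma is applied). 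This is the desired contradiction.

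I expect the main obstacle to be setting up the pointwise estimate $u_i \ge \ell_i$ correctly: it combines the degree bound with the exchange argument for monotonicity of $(c_i)$, and it needs the side conditions $1 \le c \le D-1$ and $0 \le t \le \tfrac12$ so that $\ell$ genuinely has the claimed two-segment shape and $c(D-c)$ is bounded away from $0$. After that, the optimization over the break-point $t$ and the closing AM--GM estimate are routine, and the passage from the discrete sum $\sum_i \ell_i$ to the area under $\ell$ loses only lower-order terms, which do not affect the argument in the regime where the lemma is used.
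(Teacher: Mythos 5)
Your proof takes essentially the same route as the paper: the pointwise piecewise-linear lower bound $u_i\ge\max(m-iD,\,s+(n/2-i)c,\,0)$ built from the degree bound and the monotonicity of the marginal gains $c_i$, the computation of the area under this envelope with break-point $t=\tfrac12-\tfrac{s}{n(D-c)}$, the identity reducing the total to $n(\tfrac14+\tfrac{s^2}{n^2c(D-c)})$, and the closing AM--GM bound $c(D-c)\le D^2/4$ combined with $s>\sqrt{\delta}\,m$. You add some welcome hygiene the paper elides (the exchange argument for why $(c_i)$ is non-increasing, the verification that $1\le c\le D-1$ and $t\in[0,\tfrac12]$, and acknowledging the $O(1)$ discretization slack), but the argument is the same.
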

In \cite{DBLP:journals/algorithmica/FeigeLT04} it is claimed that in the setup of Lemma \ref{reg_lemma}, the optimal solution covers $(1-\delta)$ fraction of edges. However, this is not correct, and we illustrate that with a counterexample provided in the appendix. Nevertheless, this does not greatly change the conclusion in \cite{DBLP:journals/algorithmica/FeigeLT04}, as using the correct version of the lemma just replaces one unspecified constant below $4/3$ by another unspecified constant below $4/3$.
	\par
	Let us now fix $k=n/2$, and use the Max-$k$-VC algorithm. This will give us a set $S \subseteq V$  such that $w(S,V) \geq  \alpha_{LLZ} (1-\sqrt{\delta})$. We next consider the following ordering of vertices in $V$ and calculate Minimum Sum Vertex Cover for it. We first pick vertices from $S$   in a random order, and then take the remaining vertices in random order as well. Then, the edges in $N(S,V)$   are covered by the time $|V|/4$  in expectation, while the remaining edges are covered by the time $\frac{|V|}{2} + \frac{1}{3} \frac{|V|}{2}$. Hence, we can find an ordering for which Sum Vertex Cover has value
	\begin{equation*}
		w(S,V) 	\frac{|V|}{4} + w(S^{c},S^{c}) ( \frac{|V|}{2} + \frac{|V|}{6} ) \leq  \alpha_{LLZ} (1-\sqrt{\delta}) \frac{|V|}{4} + \left(1-\alpha_{LLZ} (1-\sqrt{\delta}) \right) \cdot \frac{2|V|}{3}. 
	\end{equation*}
	We can simplify this expression as 
	\begin{equation*}
		\left(\frac{-5 \alpha_{LLZ} + 5 \alpha_{LLZ} \sqrt{\delta}}{12} + \frac{2}{3}\right)|V|.
	\end{equation*}
	Hence, we get an approximation ratio of 
	\begin{equation*}
		\frac{\frac{-5 \alpha_{LLZ} + 5 \alpha_{LLZ} \sqrt{\delta} }{12} + \frac{2}{3}}{\frac{1}{4} +\delta}.
	\end{equation*}
	In conclusion, for fixed $\varepsilon$ we have that the approximation ratio is given as
	\begin{equation*}
		\max\left( \frac{4}{3+ 3\cdot 4 \cdot\varepsilon}, \sup_{\delta\in (0,\varepsilon]}		\frac{\frac{-5 \alpha_{LLZ} + 5 \alpha_{LLZ} \sqrt{\delta} }{12} + \frac{2}{3}}{\frac{1}{4} +\delta} \right).
	\end{equation*}
	Optimizing over different values of $\varepsilon$ gives us that the approximation ratio of this algorithm is approximately $1.225$. Let us remark that even the best possible approximation ratio for Max-$k$-VC for $k=n/2$  can not give us an approximation for MSVC better than $1.220$ using the method above.
\section*{Acknowledgements}
The author thanks Johan H\aa stad for helpful discussions and useful comments on the manuscript.
\bibliography{bibl}{}
\bibliographystyle{alpha}
\appendix

\section{Addressing Argument in Theorem 11 in \cite{DBLP:journals/algorithmica/FeigeLT04}}
Let $\delta>0$. We construct a $2$-regular graph $G=(V,E)$ with $|V|=n, |E|=m,$ such that the minimum sum vertex cover has value 
\begin{equation*}
	\left(\frac{1}{4}+\delta\right) n,
\end{equation*}
while any set $S \subseteq V, |S| = \frac{n}{2}$, satisfies $|w(S,V)| \leq (1-\sqrt{\delta}) $. This shows that the factor $(1-\sqrt{\delta})$ in Lemma \ref{reg_lemma} can not be replaced by a sharper $(1-\delta)$, as it was done in \cite{DBLP:journals/algorithmica/FeigeLT04}. \par
Let $t = (\frac{1}{2}-3\sqrt{\delta}) n $  and $s = 2 \sqrt{\delta} \cdot n$, where $n \in \mathbb{N}$  is chosen such that  $t,s \in \mathbb{N}$ (we also approximate $\sqrt{\delta}$ by a rational number), and such that $t$ is even. Then, we construct the graph $G$   by taking $t/2$ disjoint copies of $K_{2,2}$  and $s$  disjoint copies of $K_3$. Let $V_1$  be the set composed of only ``the left sides'' of $t/2$  disjoint copies of $K_{2,2}$, $V_2$  the set composed of only ``the right sides'' of $t/2$   disjoint copies of $K_{2,2}$, and let $V_3$  be composed of the vertices from $s$  disjoint copies of $K_3$.
\par 
Then, the optimal solution for MSVC will work in the following three stages:
\begin{itemize}
	\item \emph{Stage 1:} Pick vertices from $V_1$  in any order. 
	\item \emph{Stage 2:} Pick one vertex from each $K_3$  in $V_3$.
	\item \emph{Stage 3:} Pick another vertex from each $K_3$  in the set $V_3$. 
\end{itemize}
It is clear that this is the minimum sum vertex cover.
The total cost is then split into the following costs:
\begin{itemize}
	\item Edges covered in the first stage. In this case we pick $t\cdot 2 $  edges, and an edge is picked on average at the time $t/2$, so the cost is 
		\begin{equation*}
			\frac{t}{2} \cdot 2 \cdot t = t^{2}.
		\end{equation*}
	\item Edges covered in the second stage. We pick $s \cdot 2$  edges, and each edge is picked on average at the time $t + s /2$, where the factor $t$ exists because this step happens after the first stage. We have the cost of 
		\begin{equation*}
			2\cdot s (t+s/2) = 2\cdot s \cdot t + s^{2}.
		\end{equation*}
	\item Edges covered in the third stage. We pick $s$  edges, and each edge is picked on average at the time $t + s +  s /2$, where the factor $t+s$ exists because this step happens after the second stage. We have the cost of 
		\begin{equation*}
			s (t+s+s/2) =  s t + s^{2} + s^{2}/2.
		\end{equation*}
\end{itemize}
Hence, the total cost is 
\begin{equation*}
	\begin{aligned}
		t^{2}+ 2\cdot s \cdot t + s^{2} + st+s^{2}+s^{2}/2 = t^{2}+3st + \frac{5s^{2}}{2}.
	\end{aligned}
\end{equation*}
Now, recalling that  $t = (\frac{1}{2}-3\sqrt{\delta}) n $  and $s = 2 \sqrt{\delta} \cdot n$, we have that the cost is 
\begin{equation*}
	\begin{aligned}
	n^{2}\cdot \left(\frac{1}{4}-3\sqrt{\delta} + 9 \delta\right) + 3 \cdot 2 \sqrt{\delta} \cdot n \cdot \left(\frac{1}{2}-3\sqrt{\delta}\right) \cdot n + \frac{5\cdot 4  \delta \cdot n^{2}}{2} \\
	= \frac{1}{4}n^{2}-3\sqrt{\delta}n^{2}+ 9 \delta n ^{2} + 3 \sqrt{\delta} n^{2} - 18 \delta n^{2} + 10 \delta n^{2} \\ 
	= \frac{1}{4}n^{2} + \delta n^{2}. 
	\end{aligned}
\end{equation*}
Now, since our graph is $2$-regular graph on $n=2t+3s$  vertices, we have that $m=n$, and we can write the total cost as 
\begin{equation*}
n	\left(\frac{1}{4} + \delta \right),
\end{equation*}
as claimed. It remains to show that for any set $S \subseteq V$ with $|S|=n/2$  we have 
\begin{equation*}
	|E(S,V)| \leq (1-\sqrt{\delta}) \cdot m.
\end{equation*}
It is obvious that the worst case $S$  is exactly the set of vertices picked in the first $n/2$  steps in the algorithm above. In this case, the number of edges not covered is $s/2$, since after $n/2$  steps we are left with one edge uncovered in exactly half of the $K_3$  triangles. Hence, the number of uncovered edges is 
\begin{equation*}
	\frac{s}{2} = \sqrt{\delta} \cdot n =  \sqrt{\delta} \cdot m,
\end{equation*}
and hence 
\begin{equation*}
	|E(S,V)| \leq (1-\sqrt{\delta}) \cdot m,
\end{equation*}
as required.
\end{document}